\newtheorem{theorem}{Theorem}
\newtheorem{lemma}{Lemma}
\newtheorem{corollary}{Corollary}
\newtheorem{proof}{Proof}
\newtheorem{proposition}{Proposition}
\newtheorem{remark}{Remark}
\begin{document}

	\title{Multi-Antenna Covert Communications {in Random Wireless Networks}}
\author{Tong-Xing~Zheng,~\IEEEmembership{Member,~IEEE,}
Hui-Ming~Wang,~\IEEEmembership{Senior~Member,~IEEE,}\\
Derrick~Wing~Kwan~Ng,~\IEEEmembership{Senior~Member,~IEEE,}
and~Jinhong~Yuan,~\IEEEmembership{Fellow,~IEEE}
\thanks{T.-X.~Zheng and H.-M. Wang are with the School of Electronic and Information Engineering, Xi'an Jiaotong University, Xi'an 710049, China
(e-mail: zhengtx@mail.xjtu.edu.cn, xjbswhm@gmail.com).}
\thanks{D. W. K. Ng and J. Yuan are with the School of Electrical Engineering and Telecommunications, University of New South Wales, Sydney, NSW 2052, Australia (e-mail: w.k.ng@unsw.edu.au, j.yuan@unsw.edu.au).}
}                         
\maketitle
\vspace{-2.1 cm}
	
\begin{abstract}
This paper studies multi-antenna-aided covert communications coexisting with randomly located wardens and interferers, considering both centralized and distributed antenna systems (CAS/DAS).
The throughput performance of the covert communication is analyzed and optimized under a stochastic geometry framework, where the joint impact of the small-scale channel fading and the large-scale path loss is examined.
To be specific, two probabilistic metrics, namely, the covert outage probability and the connectivity probability, are {adopted} to characterize the covertness and reliability of the transmission, respectively, and analytically tractable expressions for the two metrics are derived. 
The worst-case covert communication scenario is then investigated, {where the wardens invariably can maximize the covert outage probability by adjusting the detection thresholds for their detectors}. 
Afterwards, the optimal transmit power and transmission rate are jointly designed to maximize the covert throughput subject to a covertness constraint. 
Interestingly, it is found that the maximal covert throughput for both the CAS and DAS is invariant to the density of interferers and the interfering power, regardless of the number of transmit antennas. 
Numerical results demonstrate that the CAS outperforms the DAS in terms of the covert throughput for the random network of interest, and the throughput gap between the two systems increases dramatically when the number of transmit antennas becomes larger.
\end{abstract}

\begin{IEEEkeywords}
Covert communications, multi-antenna techniques, outage probability, stochastic geometry, optimization.
\end{IEEEkeywords}
	
\IEEEpeerreviewmaketitle
	
\section{Introduction}

\IEEEPARstart{I}{n} the era of {Internet-of-Things} (IoT), the provisioning of security and privacy has become a critical issue due to a soaring amount of {devices communicating} confidential and sensitive information, e.g., financial details, identity authentication, and medical records, etc, over the open wireless media \cite{Bloch2011Physical}. Various security methods through cryptographic encryption \cite{Menezes1996Handbook,Talbot2006Complexity} or physical-layer (information-theoretic) security  \cite{Yang2015Safeguading}-\cite{Wang2016Physical_book} have been developed to prevent the message content from being intercepted by unintended recipients. Nevertheless, there are many real-life circumstances where safeguarding content secrecy is far from sufficient, and the communicating parties may desire to transfer the message covertly.
Typical examples include hiding military operations to keep from being detected by enemies, or concealing secret activities of an organization to escape the attention of an authoritarian government monitoring the network. Against this background, \emph{covert communication}, or termed \emph{low probability of detection} (LPD) communication, which aims to hide the very existence of the communication itself from watchful adversaries, has recently drawn considerable research interests \cite{Bash2013Limits}-\cite{Abdelaziz2017Fundamental}.

\subsection{Previous Works and Motivations}

Since the early 20th century, spread-spectrum techniques have been extensively applied for achieving covert communications, particularly for military applications \cite{Simon1994Spread}. Nonetheless, the fundamental information-theoretic limits of covert communications have not been explored until recently. Specifically, a \emph{square root law} was presented in \cite{Bash2013Limits} for additive white Gaussian noise (AWGN) channels, which states that in $n$ channel uses, at most ${O}(\sqrt{n})$ bits of information can be conveyed to an intended receiver reliably and covertly against a vigilant adversary (warden Willie). 
This seminal work was later extended to various channel models such as binary symmetric channels \cite{Che2013Reliable}, discrete memoryless channels \cite{Bloch2016Covert,Wang2016Fundamental}, multiple access channels \cite{Arumugam2016Keyless}, and multi-input multi-output AWGN channels \cite{Abdelaziz2017Fundamental}.

It is worth mentioning that the square root law built in \cite{Bash2013Limits} manifests that the achievable covert rate, {i.e., the rate at which reliability and covertness are guaranteed simultaneously}, approaches zero as $n$ grows to infinity, i.e., $\lim_{n\rightarrow\infty}\frac{{O}(\sqrt{n})}{n}=0$. 
Such a pessimistic conclusion motivates
increasing endeavors to be devoted to exploring the condition in which a positive covert rate can be promised. Fortunately, it has been proven that a positive covert rate {is still achievable} when the warden has various uncertainties in terms of the receiver noise power \cite{Lee2015Achieving}-\cite{He2017On}, the exact {timing of} the covert communication \cite{Bash2014LPD,Bash2016Covert}, the fading channel \cite{Shahzad2017Covert}, and the jamming signal deliberately emitted either by the destination itself \cite{Hu2017Covert} or by an external friendly helper \cite{Sobers2017Covert,Soltani2017Covert}. 
Furthermore, a recent work \cite{He2017Covert} showed that the ambient signals from coexisting interferers also can be exploited to produce a positive covert rate. By modeling the interferers' positions as a Poisson point process (PPP) \cite{Haenggi2009Stochastic}, the authors in \cite{He2017Covert} revealed that the maximal covert rate for the interference-limited network is invariant to the density of interferers. 

The vast majority of existing literature concerning covert communications has been focused on a single-antenna transmitter \cite{Lee2015Achieving}-\cite{He2017Covert}, whereas the multi-antenna-assisted covert communication has not been well investigated. 
Multi-antenna communication architectures are categorized into centralized antenna systems (CASs) and distributed antenna systems (DASs).
{In the CAS, the antennas are co-located on a single device, and a joint signaling design among the antennas can significantly boost the spectrum efficiency. In the DAS, the antennas are geographically spread and connected to a central processor using coax cable or optical fiber. Compared with the CAS, the DAS can provide rich spatial diversity to combat path loss and shadowing, reduce the average distance between a transmit antenna and a receiver, and create more uniform coverage \cite{Heath2013A}.} 
Both the CAS and DAS have been substantially examined in the context of physical-layer secure transmissions and have been shown to gain a remarkable security enhancement for various wireless networks  \cite{Zhang2013Enhancing}-\cite{Wang2016Artificial}.
When applying multiple antennas to covert communications, two fundamental questions are naturally raised: \emph{ 1) {How} multi-antenna techniques benefit covert communications?; 2) Which multi-antenna architecture is more applicable to covert communications?} Theoretically, a multi-antenna transmitter is capable to use less power to support a reliable transmission by adequately exploiting the spatial degrees of freedom.
A lower energy leakage in return embarrasses the detection for a warden. In this sense, multi-antenna techniques, if designed properly, can be {{beneficial}} for covert communications. For another thing, the {{co-located}} antennas for the CAS release a higher power to a neighboring warden compared with any of the distributed antennas for the DAS because of the spatial energy dispersion in the latter.
However, if a warden is likely to appear anywhere in a network and meanwhile its location is uninformed, the geographically spread antennas actually take a higher risk of being detected by the warden. This might even offset the potential benefit brought by the spatial energy dispersion. 
Therefore, it is not intuitive whether the CAS or DAS is better suited to covert communications.

Yet so far the two questions posed above have not been answered explicitly, and the potential of multi-antenna techniques for covert communications in fading channels has not yet been excavated. 
In particular, whether the CAS or DAS can provide a higher covert communication rate {is unclear} and the performance gap between them still remain unknown. 
Moreover, existing literature on covert communications has rarely taken into account of multiple wardens and the uncertainty of their spatial locations when designing the covert communication. In practice, there exist situations where wardens desire to hide themselves for a covert detection, and then their locations appear to be random to the monitored entity. Although the authors in a recent work \cite{Soltani2017Covert} considered multiple randomly distributed wardens, they only concentrated on the single-antenna system for AWGN channels, {and their results are not applicable to multi-antenna systems with fading channels}. 
All the shortcomings mentioned above  motivate the current research work.

\subsection{Contributions}

This paper explores the covert communication for a random network where a multi-antenna transmitter communicates with a single-antenna receiver against randomly distributed single-antenna wardens and interferers.
A comprehensive analysis and optimization framework for the covert throughput of the system is provided. In particular, tools from the stochastic geometry theory \cite{Haenggi2009Stochastic} are used to capture the impact of channel fading and path loss on the system performance.
The main contributions of this paper are summarized as follows.

\begin{itemize}
	\item The covert communication for both the CAS and DAS is investigated, where maximal ratio transmitting (MRT) and distributed beamforming (DBF) are employed as transmit strategies, respectively. For each multi-antenna system, analytical expressions for the covert outage probability and the connectivity probability are derived, where the two metrics are used to depict the covertness and reliability of the covert communication, respectively. 
\item

An optimization framework incorporating the designs in terms of the detection of wardens and transmission parameters is established. 
Specifically, the worst-case scenario of the covert communication is examined in which the optimal detection thresholds are determined from the perspective of wardens. Subsequently, a maximal covert throughput is achieved through a joint optimization of the transmit power and the transmission rate. 
		
\item
Various useful insights into the multi-antenna covert communication are provided. In particular, an \emph{invariance} property is revealed for both the CAS and DAS, which states that the maximal covert throughput for an interference-limited system is invariant to either the density of interferers or the interfering power. It is also demonstrated that the CAS always reaps a throughput gain over the DAS, and the gain enlarges with more transmit antennas. 

\end{itemize}


%
\subsection{Organization and Notations}
The remainder of this paper is organized as follows. 
Section II details the system model. 
Sections III and IV analyze and optimize comprehensively the covert throughput for the CAS and DAS, respectively. 
Section V presents numerical results to validate the theoretical analyses.
Section VI draws a conclusion of this paper.

\emph{Notations}: Bold lowercase letters denote column vectors. $|\cdot|$, $\|\cdot\|$, $(\cdot)^{\dagger}$, $(\cdot)^{\rm T}$, $(\cdot)^{\rm H}$, $\ln(\cdot)$, $\mathbb{P}\{\cdot\}$, $\mathbb{E}_v[\cdot]$ denote the absolute value, Euclidean norm, conjugate, transpose, Hermitian
transpose, natural logarithm, probability, and the expectation taken over a random variable $v$, respectively.
$f_v(\cdot)$ and $\mathcal{F}_v(\cdot)$ denote the probability density function (PDF) and the cumulative distribution function (CDF) of $v$, respectively.	
$\mathbb{L}_x$ denotes the polar coordinate $(r_{x,o},\theta_{x,o})$ with a distance $r_{x,o}$ and an angle $\theta_{x,o}$ to the origin $o$. $\mathcal{B}(o,D)$ denotes the disc centered at $o$ with a radius $D$. 

\section{System Model}
\begin{figure}[!t]
	\centering
	\includegraphics[width = 3.6in]{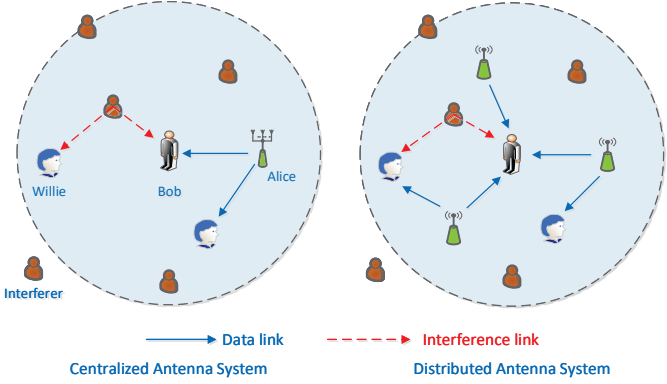}
	\caption{Illustration of a multi-antenna covert communication system. 
	{Alice's transmit antennas are co-located for the CAS and are deployed at different places for the DAS (three antennas in both figures).} Multiple wardens Willies move randomly and independently inside a certain region around Bob, and they aim to detect any transmission by Alice (two wardens within the dashed circle in the figure). There coexist numerous interferers randomly located in the network (five interferers in the figure). }
	\label{System_model}
\end{figure}
Consider a two-dimensional wireless network comprised of a source Alice, a destination Bob, $N$ wardens Willies, and numerous interferers, as depicted in Fig. \ref{System_model}. 
Wardens seek to detect any transmission by Alice, and Alice desires to deliver messages to Bob reliably while guaranteeing a low probability of being detected by Willies.
Alice is equipped with $M$ antennas while all the other nodes including Bob, Willies, and the interferers each are single-antenna devices. 
Consider two different multi-antenna paradigms, namely, the CAS and DAS, where Alice's transmit antennas are deployed together and are dispersed geographically, respectively. 
Without loss of generality, Bob is placed at the origin $o$ of the polar coordinate and the location of Alice's $m$-th antenna is denoted as $\mathbb{L}_{a_m}$. In particular, the $M$ antennas for the CAS share the same position $\mathbb{L}_{a}$. 
Suppose that the $N$ Willies are located independently and uniformly inside a disc $\mathcal{B}(o,D)$ centered with Bob such that the distribution of their locations $\{\mathbb{L}_{w}\}$ follows a binomial point process (BPP) $\Phi_W$ within $\mathcal{B}(o,D)$, i.e., $\mathbb{L}_{w}\in\Phi_W$.
The interferers are assumed to be scattered randomly in the network and their locations $\{\mathbb{L}_{j}\}$ are modeled as a homogeneous PPP $\Phi_J$ with density $\lambda_J$ on the entire two-dimensional plane, i.e., $\mathbb{L}_{j}\in\Phi_J$ \cite{Zhang2013Enhancing}-\cite{Wang2016Physical}.

\subsection{Channel Model}

All the wireless channels in the network undergo a standard distance-based path loss along with a frequency flat Rayleigh fading. 
The complex channel gains from Alice's $m$-th transmit antenna and from the interferer located at $\mathbb{L}_j$ to a receiving node at $\mathbb{L}_x$ are respectively expressed by $h_{a_m,x}r_{a_m,x}^{-\alpha/2}$ and $h_{j,x}r_{j,x}^{-\alpha/2}$, where $h_{a_m,x}$, $h_{j,x}$ denote the fading coefficients, $r_{a_m,x}$, $r_{j,x}$ denote the corresponding distances, and $\alpha$ denotes the path-loss exponent. 
For convenience, define $\bm h_{a,x}\triangleq \left[h_{a_1,x},\cdots,h_{a_M,x}\right]^{\rm T}$ as the fading coefficient vector from Alice to the receiver at $\mathbb{L}_x$.

Consider a time-slotted system where the locations of all the nodes and the fading coefficients remain static in a time slot. 
Assume that Alice knows perfectly the instantaneous channel state information of the channel from herself to Bob, i.e., $\bm h_{a,o}$. Hence, she can adapt the weight coefficients for her antennas to boost the received signal strength for Bob. Specifically, Alice employs MRT and DBF as transmit strategies for the CAS and DAS, respectively, where the weight coefficient for the $m$-th transmit antenna is devised in the form of\footnote{{Artificial noise is commonly exploited to confuse eavesdroppers in literature on physical-layer security, e.g., \cite{Zheng2015Multi}-\cite{Deng2016Artificial}, which however is not considered for covert communications here. The reason behind is twofold: allocating part of transmit power for artificial noise will barely change Willie's total received power so that it can scarcely improve the covertness due to the energy detection at Willie side. On the contrary, it will lower the power available for message delivery, thus degrading the reliability. }} \begin{align}\label{weight}
g_m &
=\begin{cases}
\frac{h_{a_m,o}^{\dagger}}{\|\bm h_{a,o}\|},& \rm CAS,\\
\frac{h_{a_m,o}^{\dagger}}{|h_{a_m,o}|},& \rm DAS.
\end{cases}
\end{align}
When Alice transmits a symbol $s[k]$, the signals received by Bob and by Willie at $\mathbb{L}_w$, denoted as $y_{o}[k]$ and $y_{w}[k]$, respectively, are uniformly expressed as
\begin{equation}\label{y_x}
y_x[k] = U_x[k]+V_x[k]+z_x[k],~ x\in\{o,w\},
\end{equation}
where $U_x[k]=\sum_{m=1}^M\sqrt{P_{m}}g_m h_{a_m,x}r_{a_m,x}^{-\alpha/2}s[k]$ is the total signal collected from Alice, with $P_m$ being the transmit power of Alice's $m$-th antenna; $V_x[k] = \sum_{\mathbb{L}_j\in\Phi_J}{\sqrt{P_J}h_{j,x}r_{j,x}^{-\alpha/2}}v_{j}[k]$ is the aggregate received interference, with $u_{j}[k]$ and $P_J$ being the signal radiated from the interferer at $\mathbb{L}_j$ and its transmit power, respectively; $z_x[k]$ is the thermal noise at the receiver with variance $\sigma_x^2$. 
It is assumed that the transmitted signals $s[k]$, $v_{j}[k]$ and the fading coefficients $h_{a_m,x}$, $h_{j,x}$ are independent and identically distributed (i.i.d.) with zero mean and unit variance.

\subsection{Detection of Covert Communications}
Wardens Willies attempt to judge whether Alice is transmitting or not by performing an optimal statistical hypothesis test (such as the Neyman-Pearson test \cite{Lehmann2005Testing}) on the observed sequence $\{y_w[k]\}_{k=1}^K$ in a communication slot. To this end, Willies should distinguish two hypotheses, namely, the null hypothesis $\mathcal{H}_0$ meaning that Alice is not transmitting and the alternate hypothesis $\mathcal{H}_1$ indicating an ongoing communication. The two hypotheses are detailed as below:
\begin{align}
\mathcal{H}_0:~& y_{w}[k] = V_{w}[k]+z_{w}[k],\\
\mathcal{H}_1:~& y_{w}[k] = U_{w}[k]+V_{w}[k]+z_{w}[k].
\end{align}
Willies' ultimate goal is to detect whether $\{y_w[k]\}_{k=1}^K$ comes from $\mathcal{H}_0$ or $\mathcal{H}_1$.
{A correct detection corresponds to either the acceptance of $\mathcal{H}_0$ when it is true or the rejection of $\mathcal{H}_0$ when it is false. 
The probability of a correct detection is termed the \emph{detection probability}.}
{Given that a radiometer  is generally employed in practice for detection, it is assumed here that Willies adopt the radiometers as their detectors as well. This assumption is justified in \cite{He2017Covert}.}
With a radiometer, the decision rule is described as below:
\begin{equation}\label{test_rule}
\bar{P}_{w}\underset{\mathcal{D}_0}{\overset{\mathcal{D}_1}{\gtrless}} \xi,
\end{equation}
where the test
statistic $\bar P_{w}=\frac{1}{K}\sum_{k=1}^K| y_{w}[k]|^2$ is given by the average power received by Willie at $\mathbb{L}_w$ in a time slot, and $\xi>0$ is a predefined detection threshold for the detector. $\mathcal{D}_0$ and $\mathcal{D}_1$ stand for the binary decisions in favor of $\mathcal{H}_0$ and $\mathcal{H}_1$, respectively, and decision $\mathcal{D}_0$ is made if $\bar P_{w}\le \xi$ whereas decision $\mathcal{D}_1$ is made otherwise. 
{In this way, the detection probability for Willie at $\mathbb{L}_w$ is defined as $p_w \triangleq \mathbb{P}\left[\mathcal{H}_0\right]\mathbb{P}\left[\mathcal{D}_0|\mathcal{H}_0\right]+\mathbb{P}\left[\mathcal{H}_1\right]\mathbb{P}\left[\mathcal{D}_1|\mathcal{H}_1\right]$, where $p_{w}=1$
	corresponds to a perfect detection and on the contrary $p_{w}=0.5$ is no better than random guessing. For simplicity, consider equal \emph{a priori} probabilities of hypotheses $\mathcal{H}_0$ and $\mathcal{H}_1$ such that $\mathbb{P}\left[\mathcal{H}_0\right]=\mathbb{P}\left[\mathcal{H}_1\right]=0.5$.}

The detection probability is affected by the uncertainties from transmitted signals, receiver noise, fading channels, and node positions.
Assume that Willies exploit an infinite number of signal samples to perform the detection, i.e., $K\rightarrow\infty$, then the uncertainties of transmitted signals and receiver noise vanish. Consequently, the average received power $\bar P_{w}$ is rewritten as  
\begin{align}\label{PW}
\bar P_{w} &
=\begin{cases}
I_{w}+\sigma_{w}^2,& \mathcal{H}_0,\\
S_{w}+I_{w}+\sigma_{w}^2,& \mathcal{H}_1 ,
\end{cases}
\end{align}
where $S_{w} = \big|\sum_{m=1}^M\sqrt{P_{m}}g_m h_{a_m,w}r_{a_m,w}^{-\alpha/2}\big|^2 $ and $I_{w}=\sum_{\mathbb{L}_j\in\Phi_J}{P_J}|h_{j,x}|^2r_{j,x}^{-\alpha}$ denote the received signal power from Alice and from the interferers, respectively. 
Recalling the decision rule described in \eqref{test_rule}, 
{the detection probability $p_{w}$ for certain channel realizations and node locations (i.e., for given $S_w$, $I_w$, and $\sigma_w^2$) is calculated as below:}
\begin{align}\label{de}
p_{w} 
=\begin{cases}
1,& I_{w}+\sigma_{w}^2\le\xi< S_{w}+I_{w}+\sigma_{w}^2,\\
0.5,& \rm otherwise.
\end{cases}
\end{align}
{Note that the detection probability $p_{w}$ is either 1 or 0.5, depending on the setting of detection threshold $\xi$. Moreover, if taking into consideration the randomness of $S_{w}$ and $I_{w}$, the detection probability $p_{w}$ would become a Bernoulli distributed random variable for any fixed threshold $\xi$.} 

\subsection{Performance Metrics}
This subsection introduces several metrics which are used to characterize the performance of the covert communication system under investigation. 
\subsubsection{Covert Outage Probability}
The covert communication between Alice and Bob fails when it is detected by any Willie, and then a covert outage event is said to have occurred. The probability that this event happens is referred to as the \emph{covert outage probability} \cite{He2017On}, denoted as $\mathcal{O}$, which quantifies the covertness of the communication.
{{Since the detection probability $p_{w}$ given in \eqref{de} is a Bernoulli random variable, the covert outage probability $\mathcal{O}$ is defined as the probability that there is at least one Willie having a detection probability equal to one, i.e.,}}
\begin{equation}\label{def_oco}
\mathcal{O} = \mathbb{E}_{\Phi_W}\left[\mathbb{P}\left\{\bigcup_{\mathbb{L}_w\in\Phi_W}p_{w}=1\right\}\right].
\end{equation}
Note that the inner probability in \eqref{def_oco} is operated over the random variables $S_w$ and $I_w$ for $\mathbb{L}_w\in\Phi_W$ and the outer expectation is taken over Willies' random locations $\{\mathbb{L}_w\}$.

\subsubsection{Connectivity Probability}
Revisit the received signal $y_{o}[k]$ in \eqref{y_x}, and the  signal-to-interference-plus-noise ratio (SINR) of the channel from Alice to Bob is expressed as \begin{equation}\label{sinr}
\gamma_o = \frac{S_o}{I_o+\sigma_o^2},
\end{equation}
where $S_{o} = \big|\sum_{m=1}^M\sqrt{P_{m}}g_m h_{a_m,o}r_{a_m,o}^{-\alpha/2}\big|^2  $ and $I_{o}=\sum_{\mathbb{L}_j\in\Phi_J}{P_J}|h_{j,o}|^2r_{j,o}^{-\alpha}$ denote the power of the desired signal from Alice and the aggregate interference power, respectively. 
With \eqref{sinr}, the achievable rate of Bob is given by $C_o = \ln (1+\gamma_o)$ nats/s/Hz.
If a target transmission rate $R$ can be supported, i.e., $C_o\ge R$, Alice is deemed to be successfully connected to Bob, and Bob can recover the messages delivered from Alice. The metric \emph{connectivity probability}, denoted as $\mathcal{C}$, is adopted to measure transmission reliability, and is defined as the probability that the SINR $\gamma_o$ is larger than or equal to the SINR threshold $\beta\triangleq e^R-1$, as given below:
\begin{equation}\label{def_oc}
\mathcal{C} = \mathbb{P}\left\{\frac{S_o}{I_o+\sigma_o^2}\ge\beta\right\}.
\end{equation}

\subsubsection{Covert Throughput}
A core metric named \emph{covert throughput}, denoted as $\mathcal{T}$, is employed in order to evaluate the rate efficiency of the covert communication. The covert throughput is defined as the average successfully transmitted amount of information per second per Hertz subject to a covertness requirement $\mathcal{O}\le\epsilon$, where the threshold $\epsilon\in[0,1]$ represents the maximal acceptable covert outage probability. 
{Formally, the covert throughput is expressed as the product of the connectivity probability $\mathcal{C}$ and the transmission rate $R$, which is described as
\begin{equation}\label{ct}
\mathcal{T} = \mathcal{C}R, ~~ \mathcal{O}\le\epsilon.
\end{equation}
If the covertness constraint $\mathcal{O}\le\epsilon$ is violated, $\mathcal{T}$ is set to zero.}
The covert throughput defined in \eqref{ct} would turn into the well-known secrecy throughput, if the covertness constraint changes to a secrecy outage probability constraint \cite{Zheng2017Physical}.

The following two sections proceed to the covert throughput maximization for the CAS and DAS, respectively. Due to uncoordinated concurrent transmissions by the interferers, the aggregate interference power at a receiver typically dominates the noise power. For tractability, an interference-limited network is focused on by ignoring the thermal noise such that both $\sigma_w^2$ in \eqref{de} and $\sigma_o^2$ in \eqref{def_oc} are removed. In fact, the obtained results can
be easily generalized to the case with the inclusion of the thermal noise, which however would only complicate the analysis but provide no significant qualitative difference.

\section{Centralized Antenna Systems}
This section examines the covert communication for the CAS, where Alice places all her antennas at the same location $\mathbb{L}_a$, i.e., $(r_{a,o},\theta_{a,o})$. The total transmit power of Alice is denoted as $P_A$. Before proceeding to maximizing the covert throughput, some important insights into the covert outage probability $\mathcal{O}$ and the connectivity probability $\mathcal{C}$ are provided. 

\subsection{Covert Outage Probability}
Based on the detection probability $p_{w}$ in \eqref{de}, the covert outage probability $\mathcal{O}$ defined in \eqref{def_oco} can be interpreted as the complement of the probability that all the Willies' detection probabilities are less than one, which is reformulated as
 \begin{align}\label{oco}
\mathcal{O} &=1 -  \mathbb{E}_{\Phi_W}\left[\mathbb{P}\left\{\bigcap_{\mathbb{L}_w\in\Phi_W}p_{w}<1\right\}\right]\nonumber\\
&= 1 -  \mathbb{E}_{\Phi_W}\left[\prod_{\mathbb{L}_w\in\Phi_W}\left(1-\bar p_{w}\right)\right],
\end{align}
where $\bar p_{w}\triangleq \mathbb{P}\left\{p_{w}=1\right\}$ denotes the average detection probability for Willie at $\mathbb{L}_w\in\Phi_W$. 
{Note that the second equality in \eqref{oco} follows from the assumption that Willies do not collude with each other such that their detections are independent.}
Before computing $\mathcal{O}$, it is needed to calculate $\bar p_{w}$, which can be obtained from \eqref{de} by averaging over the random variables $S_w$ and $I_w$, i.e.,
\begin{align}\label{average_pe}
\bar p_{w}&= \mathbb{P}\{I_w\le\xi<S_w+I_w\}=\mathcal{F}_{I_w}(\xi)-\mathcal{F}_{S_w+I_w}(\xi)\nonumber\\
&=  \mathcal{F}_{I_w}(\xi)-\int_0^{\xi}\mathcal{F}_{I_w}(\xi-x)f_{S_w}(x)dx,\!\!
\end{align}
where $S_w = P_A\frac{|\bm h_{a,o}^{\rm H} \bm h_{a,w}|^2}{\|\bm h_{a,o}\|^2}r_{a,w}^{-\alpha}$.
It is verified that $S_w$ is exponentially distributed with $f_{S_w}(x) = \frac{r_{a,w}^{\alpha}}{P_A}e^{-{r_{a,w}^{\alpha}x}/{P_A}}$ \cite{Zheng2015Multi} which is independent of $M$. In other words, there is no statistical difference from the performance of Willie whether Alice uses a single antenna or multiple antennas when Alice adopts MRT. Hence, adding transmit antennas will exert no impact on the covert outage probability as long as the total transmit power is fixed. As the interferers' locations are modeled by a homogeneous PPP, the aggregate interference $I_w$ is the shot noise \cite{Lowen1990Power}. 
Generally, $\mathcal{F}_{I_w}(x)$ only can be displayed in an infinite series \cite{Lowen1990Power}, which causes a high computational complexity to calculate $\bar p_{w}$. In order to mitigate the calculation burden, the Laplace transform of $I_w$ is invoked.
\begin{lemma}[{\cite[Eqn. (8)]{Haenggi2009Stochastic}}]\label{lemma_laplace_i}
The Laplace transform of $I_w$ evaluated at $s$ is given by
\begin{equation}\label{laplace_i}
\mathcal{L}_{I_w}(s)=\mathbb{E}_{I_w}\left[e^{-sI_w}\right] =e^{-\kappa\lambda_J P_J^{\delta} s^{\delta}},
\end{equation}
where $\delta = 2/\alpha$, $\kappa\triangleq \pi\Gamma(1+\delta)\Gamma(1-\delta)$, and $\Gamma(z)$ is the gamma function \cite[Eqn. (8.310.1)]{Gradshteyn2007Table}.
\end{lemma}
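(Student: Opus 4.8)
The plan is to exploit the independence of the homogeneous PPP $\Phi_J$ together with the independent, identically distributed fading marks $\{h_{j,w}\}$, and to invoke the probability generating functional (PGFL) of a PPP. Recalling $I_{w}=\sum_{\mathbb{L}_j\in\Phi_J}P_J|h_{j,w}|^2r_{j,w}^{-\alpha}$, the Laplace transform factorizes over the points of $\Phi_J$ once the i.i.d.\ fading marks are pre-averaged at each point, so that
\[
\mathcal{L}_{I_w}(s) = \mathbb{E}_{\Phi_J}\!\left[\prod_{\mathbb{L}_j\in\Phi_J}\mathbb{E}_{h_{j,w}}\!\left[e^{-sP_J|h_{j,w}|^2r_{j,w}^{-\alpha}}\right]\right].
\]
First I would evaluate the inner fading expectation. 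Under Rayleigh fading $|h_{j,w}|^2$ is a unit-mean exponential variable, for which a direct computation gives $\mathbb{E}_{h_{j,w}}[e^{-sP_J|h_{j,w}|^2r^{-\alpha}}]=(1+sP_Jr^{-\alpha})^{-1}$.

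Next I would apply the PGFL for a homogeneous PPP of density $\lambda_J$ on $\mathbb{R}^2$, which converts the expected product into the exponential of an integral over the plane,
\[
\mathcal{L}_{I_w}(s)=\exp\!\left(-\lambda_J\int_{\mathbb{R}^2}\!\left(1-\frac{1}{1+sP_J\|z\|^{-\alpha}}\right)dz\right).
\]
Switching to polar coordinates and integrating the angular part trivially over $[0,2\pi)$ collapses the exponent to the single radial integral $2\pi\lambda_J\int_0^{\infty}\frac{sP_Jr}{r^{\alpha}+sP_J}\,dr$.

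The technical core is the evaluation of this radial integral. I would apply the substitution $t=r^{\alpha}/(sP_J)$ to recast it as $\frac{(sP_J)^{\delta}}{\alpha}\int_0^{\infty}\frac{t^{\delta-1}}{1+t}\,dt$ with $\delta=2/\alpha$, and then recognize the remaining integral as the Beta-function form $\int_0^{\infty}\frac{t^{\delta-1}}{1+t}\,dt=\Gamma(\delta)\Gamma(1-\delta)$. The main obstacle is ensuring both convergence and the validity of this closed form: the integrand scales like $r$ near the origin (harmless) but like $r^{1-\alpha}$ at infinity, so finiteness demands $\alpha>2$, equivalently $0<\delta<1$, which is precisely the regime in which the Beta-function identity holds. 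Finally I would collapse the prefactors using $1/\alpha=\delta/2$ and $\delta\Gamma(\delta)=\Gamma(1+\delta)$, turning the exponent into $\pi\lambda_J\Gamma(1+\delta)\Gamma(1-\delta)P_J^{\delta}s^{\delta}=\kappa\lambda_J P_J^{\delta}s^{\delta}$, which matches the claimed expression and completes the proof.
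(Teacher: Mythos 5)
Your derivation is correct, but it takes a different route from the paper in a trivial sense: the paper offers no proof at all for this lemma, simply citing Eqn.~(8) of the Haenggi et al.\ reference, whereas you have reconstructed the standard stochastic-geometry argument that underlies that cited equation. Your chain of steps is sound: pre-averaging the i.i.d.\ Rayleigh marks gives the factor $\left(1+sP_J r^{-\alpha}\right)^{-1}$ since $|h_{j,w}|^2$ is unit-mean exponential; the probability generating functional of the homogeneous PPP converts the expected product into $\exp\left(-\lambda_J\int_{\mathbb{R}^2}\frac{sP_J}{\|z\|^{\alpha}+sP_J}\,dz\right)$; and the substitution $t=r^{\alpha}/(sP_J)$ together with $\int_0^{\infty}\frac{t^{\delta-1}}{1+t}\,dt=\Gamma(\delta)\Gamma(1-\delta)$ and $\delta\Gamma(\delta)=\Gamma(1+\delta)$ yields exactly the exponent $\kappa\lambda_J P_J^{\delta}s^{\delta}$. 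Two things your version buys that the citation does not: it makes explicit the convergence condition $\alpha>2$ (equivalently $0<\delta<1$), without which both the radial integral and the Beta-function identity fail, and it exposes where the independence of the fading marks from the point process is actually used (to push the fading expectation inside the product before applying the PGFL). What the paper's approach buys is brevity and an appeal to a well-established reference, which is appropriate since this Laplace transform is a textbook result in the stochastic-geometry literature; nothing in your argument contradicts or deviates from what that reference proves.
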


With the aid of Lemma \ref{lemma_laplace_i} together with a widely used approximation approach \cite{Singh2015Tractable}, a closed-form expression for $\mathcal{F}_{I_w}(x)$ is provided by the following lemma.
\begin{lemma}\label{lemma_cdf_i}
The CDF $\mathcal{F}_{I_w}(x)$ is approximated by
\begin{equation}\label{cdf_i}
{{\mathcal{F}}_{I_w}(x) \approx} \sum_{l=1}^L {L\choose l}(-1)^{l+1}e^{-\kappa\lambda_J P_J^{\delta}\left(\frac{l\tau }{x}\right)^{\delta}},
	\end{equation}
	where $\tau \triangleq L(L!)^{-1/L}$ and $L$ is the number of terms applied for the approximation.
\end{lemma}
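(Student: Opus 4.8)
The plan is to replace the hard indicator underlying $\mathcal{F}_{I_w}(x)$ by a finite sum of exponentials in $I_w$, so that taking expectations reduces everything to evaluations of the Laplace transform $\mathcal{L}_{I_w}(\cdot)$ furnished by Lemma~\ref{lemma_laplace_i}. Writing the CDF as a complementary probability, $\mathcal{F}_{I_w}(x)=1-\mathbb{P}\{I_w>x\}=1-\mathbb{E}_{I_w}\!\left[\mathbf{1}\{I_w/x>1\}\right]$, it suffices to approximate the unit step $\mathbf{1}\{y>1\}$ as a function of $y$ and then substitute $y=I_w/x$.

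First I would introduce the auxiliary approximation on which \cite{Singh2015Tractable} rests. Let $g$ be a normalized gamma random variable with integer shape parameter $L$ and unit mean, whose CDF is the regularized lower incomplete gamma function. By Alzer's inequality this CDF is tightly approximated by $\mathbb{P}\{g<\gamma\}\approx\bigl(1-e^{-\tau\gamma}\bigr)^{L}$, and the constant is pinned down as $\tau=L(L!)^{-1/L}$ by matching the two sides as $\gamma\to 0$, where both vanish like $\gamma^{L}$ and equality of the leading coefficients forces $\tau^{L}=L^{L}/L!$. Since a unit-mean gamma random variable concentrates at $1$ as $L\to\infty$, its CDF tends to the step $\mathbf{1}\{\gamma>1\}$, which yields the key deterministic approximation $\mathbf{1}\{y>1\}\approx\bigl(1-e^{-\tau y}\bigr)^{L}$.

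Next I would substitute $y=I_w/x$, expand by the binomial theorem, and pass the (finite) sum through the expectation:
\begin{equation}
\mathbb{P}\{I_w>x\}\approx\mathbb{E}_{I_w}\!\Bigl[\bigl(1-e^{-\tau I_w/x}\bigr)^{L}\Bigr]=\sum_{l=0}^{L}\binom{L}{l}(-1)^{l}\,\mathcal{L}_{I_w}\!\Bigl(\frac{l\tau}{x}\Bigr),
\end{equation}
where the interchange is exact because the sum is finite and $\mathbb{E}_{I_w}[e^{-(l\tau/x)I_w}]=\mathcal{L}_{I_w}(l\tau/x)$ by definition. Substituting the closed form $\mathcal{L}_{I_w}(s)=e^{-\kappa\lambda_J P_J^{\delta}s^{\delta}}$ from Lemma~\ref{lemma_laplace_i}, isolating the $l=0$ term (which equals $1$ because $\mathcal{L}_{I_w}(0)=1$ and thus cancels the leading $1$ in $\mathcal{F}_{I_w}(x)=1-\mathbb{P}\{I_w>x\}$), and flipping the sign of the remaining terms gives exactly \eqref{cdf_i}.

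The main obstacle is not any of these algebraic manipulations but the legitimacy and accuracy of the surrogate in the first step: the replacement of $\mathbf{1}\{y>1\}$ by $(1-e^{-\tau y})^{L}$ is an approximation, not an identity, with the discrepancy concentrated in a shrinking neighbourhood of $y=1$ and controlled by the number of terms $L$. I would therefore emphasize that \eqref{cdf_i} becomes asymptotically tight as $L$ grows, justify the choice $\tau=L(L!)^{-1/L}$ through the small-argument matching above, and note that moderate $L$ already suffices numerically, so that the expression remains a faithful yet analytically tractable stand-in for the otherwise infinite-series CDF of the shot noise $I_w$.
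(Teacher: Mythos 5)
Your proposal is correct and follows essentially the same route as the paper's proof: both replace the sharp threshold by the normalized-gamma/Alzer surrogate of \cite{Singh2015Tractable}, expand binomially, and evaluate the resulting terms via the Laplace transform of Lemma~\ref{lemma_laplace_i}. The only differences are cosmetic---you route through the complementary probability $1-\mathbb{P}\{I_w>x\}$ and additionally justify the constant $\tau=L(L!)^{-1/L}$ by small-argument matching, which the paper leaves implicit.
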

{\begin{proof}
	$\mathcal{F}_{I_w}(x)$ is calculated as follows,
		\begin{align}\label{cdf_iw}
		\mathcal{F}_{I_w}(x)  &= \mathbb{P}\left\{I_w\le x\right\} 
		=\mathbb{P}\left\{{I_w}/{x}\le 1 \right\}\stackrel{\mathrm{(a)}}\approx
		\mathbb{P}\left\{{I_w}/{x}\le \iota \right\}\nonumber\\
		&	\stackrel{\mathrm{(b)}}\lessapprox 1 - \mathbb{E}_{I_w}\left[\left(1 - e^{-{\tau I_w}/{x}}\right)^L\right],
		\end{align}
		where the dummy variable $\iota$ introduced in $\rm (a)$ is a normalized gamma random variable with the shape parameter $L$, and $\rm (a)$ follows from the fact that $\iota$ converges to one as $L$ approaches infinity \cite{Singh2015Tractable}; $\rm (b)$ yields a tight upper bound by invoking Alzer's inequality \cite{Alzer1997On}, i.e., $\mathbb{P}\{\iota\ge z\}\lessapprox 1 - \left[1-e^{-\xi z}\right]^L$ for a constant $z>0$. Using the binomial expansion with \eqref{cdf_iw} and plugging the Laplace transform $\mathcal{L}_{I_w}\left(s\right)$ in \eqref{laplace_i} with $s={l\tau}/{x}$ completes the proof.
\end{proof}}

Substituting the approximated CDF $\mathcal{F}_{I_w}(x)$ into \eqref{average_pe}, the following theorem is obtained.
\begin{theorem}\label{theorem_average_pe}
	The average detection probability $\bar p_{w}$ in \eqref{average_pe} is approximated by
\begin{equation}\label{app_pe}
{\bar p_{w}\approx} \sum_{l=1}^L {L\choose l}(-1)^{l+1}
\left[e^{-\kappa\lambda_J P_J^{\delta}\left(\frac{l\tau }{\xi}\right)^{\delta}}
-e^{-\frac{r_{a,w}^{\alpha}\xi}{P_A}}{Z}_l(\xi,\alpha)\right],
\end{equation}
where ${Z}_l(\xi,\alpha)=\frac{r_{a,w}^{\alpha}}{P_A}\int_0^{\xi}e^{-\kappa\lambda_J P_J^{\delta}\left({l\tau}/{y}\right)^{\delta}+{r_{a,w}^{\alpha}y}/{P_A}}dy$ with $\tau$ and $L$ defined in Lemma \ref{lemma_cdf_i}.
\end{theorem}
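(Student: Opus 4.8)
The plan is to substitute the closed-form CDF approximation of Lemma~\ref{lemma_cdf_i} directly into the exact decomposition of $\bar p_w$ already recorded in \eqref{average_pe}. Since \eqref{average_pe} writes $\bar p_w$ as $\mathcal{F}_{I_w}(\xi)$ minus the convolution integral $\int_0^{\xi}\mathcal{F}_{I_w}(\xi-x)f_{S_w}(x)\,dx$, and since the exponential density $f_{S_w}(x)=\frac{r_{a,w}^{\alpha}}{P_A}e^{-r_{a,w}^{\alpha}x/P_A}$ is available, no probabilistic argument beyond \eqref{average_pe} is needed; the task reduces to a deterministic evaluation of the two terms.

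First I would treat the standalone term: setting $x=\xi$ in \eqref{cdf_i} immediately yields $\mathcal{F}_{I_w}(\xi)\approx\sum_{l=1}^{L}\binom{L}{l}(-1)^{l+1}e^{-\kappa\lambda_J P_J^{\delta}(l\tau/\xi)^{\delta}}$, which is precisely the first summand inside the bracket of \eqref{app_pe}. Then I would insert the same approximation, now with argument $\xi-x$, together with $f_{S_w}$ into the integral, producing $\sum_{l=1}^{L}\binom{L}{l}(-1)^{l+1}\frac{r_{a,w}^{\alpha}}{P_A}\int_0^{\xi}e^{-\kappa\lambda_J P_J^{\delta}(l\tau/(\xi-x))^{\delta}}e^{-r_{a,w}^{\alpha}x/P_A}\,dx$. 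Because the sum has finitely many terms, interchanging it with the integral is immediate.

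The cleanup step is the substitution $y=\xi-x$, which replaces the awkward $(\xi-x)$-dependence of the path-loss term by a plain $y$ and converts $e^{-r_{a,w}^{\alpha}x/P_A}$ into $e^{-r_{a,w}^{\alpha}\xi/P_A}e^{+r_{a,w}^{\alpha}y/P_A}$. Pulling the $x$-independent factor $e^{-r_{a,w}^{\alpha}\xi/P_A}$ outside the integral leaves exactly $\frac{r_{a,w}^{\alpha}}{P_A}\int_0^{\xi}e^{-\kappa\lambda_J P_J^{\delta}(l\tau/y)^{\delta}+r_{a,w}^{\alpha}y/P_A}\,dy$, which is the definition of $Z_l(\xi,\alpha)$. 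Subtracting this from the standalone term and collecting the two contributions under a single sum reproduces \eqref{app_pe}.

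I do not expect a genuine obstacle here, since the theorem inherits its approximation entirely from Lemma~\ref{lemma_cdf_i}; the one point worth flagging is that $Z_l$ does not collapse to elementary or standard special functions, because its exponent couples the inverse-power term $(l\tau/y)^{\delta}$ with the linear term $r_{a,w}^{\alpha}y/P_A$. Consequently the result is best regarded as a single-integral, numerically evaluable expression rather than a fully closed form, and it should be noted that no additional approximation beyond that of Lemma~\ref{lemma_cdf_i} is introduced at this stage.
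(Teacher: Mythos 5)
Your proposal is correct and is essentially the paper's own argument: the paper obtains Theorem \ref{theorem_average_pe} exactly by substituting the approximate CDF of Lemma \ref{lemma_cdf_i} into \eqref{average_pe}, using the exponential density $f_{S_w}$, and your change of variables $y=\xi-x$ with the factor $e^{-r_{a,w}^{\alpha}\xi/P_A}$ pulled out of the integral reproduces $Z_l(\xi,\alpha)$ precisely as defined.
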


The expression of the approximate $\bar p_{w}$ in \eqref{app_pe} is simple and practically closed-form which requires only the computation or lookup of a ${Z}_l(\xi,\alpha)$ value. 
For a special case with $\alpha=4$, a closed-form expression for the PDF of $I_w$ is found in \cite{Haenggi2012Stochastic}, which is rewritten below:
\begin{equation}\label{pdf_i}
f_{I_w}(x) = \frac{\pi^{3/2}\lambda_J\sqrt{P_J}}{4x^{3/2}}e^{-\frac{\pi^{4}\lambda_J^2{P_J}}{16x}}.
\end{equation}
With \eqref{pdf_i}, the CDF $\mathcal{F}_{I_w}(x)$ is simplified as
\begin{equation}\label{cdf_i4}
\mathcal{F}_{I_w}(x) = \int_0^x f_{I_w}(x)dx = 1 - {\rm erf}\left(\frac{\pi^{2}\lambda_J\sqrt{P_J}}{4\sqrt{x}}\right),
\end{equation}
where ${\rm erf}(z)$ is the error function \cite[Eqn. (8.250.1)]{Gradshteyn2007Table}. Plugging \eqref{cdf_i4} into \eqref{average_pe} yields 
\begin{equation}\label{pe_4}
{\bar p}_{w}^{\alpha=4} = e^{-B\xi}\left[1+\int_0^{\xi}{\rm erf}\left(\frac{A}{\sqrt{y}}\right)Be^{By}dy\right]-{\rm erf}\left(\frac{A}{\sqrt{\xi}}\right),
\end{equation}
where $A = {\pi^2\lambda_J\sqrt{P_J}}/{4}$ and $B = {r_{a,w}^{4}}/{P_A}$. The correctness of the exact $\bar p_w$ given in \eqref{pe_4} is confirmed by Monte-Carlo simulation results as shown in Fig. \ref{ADP}.\footnote{{The simulation results are obtained by using 100,000 trails. Each trial distributes $N_i$ interferers uniformly as a BPP inside a sufficiently large square area $S_j=[-L_j~L_j]^2$, where $N_i$ is a Poisson random variable with mean $4\lambda_JL_j^2$. All the channel fading coefficients are generated to be i.i.d. complex Gaussian with zero mean and unit variance. The received signal power $S_w$ and interference power $I_w$ are computed from \eqref{PW}. Finally, the average detection probability $\bar p_w$ is computed by counting the number of times the event $I_w\le\xi<S_w+I_w$ in \eqref{average_pe} happens.}}
Moreover, the approximations derived in \eqref{app_pe} coincide well with \eqref{pe_4} when $L = 5$ is chosen. 
\begin{figure}[!t]
	\centering
	\includegraphics[width = 3.8in]{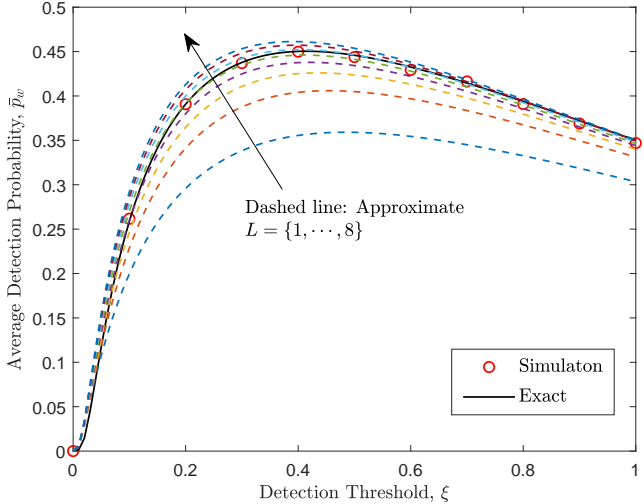}
	\caption{$\bar p_{w}$ vs. $\xi$ for different $L$'s, with $P_A=30$ dBm, $\lambda_J = 0.1$, and $r_{a,w} = 1$. Other parameters are specified in Sec. V.}
	\label{ADP}
\end{figure}

Having obtained ${\bar p}_{w}$, the covert outage probability $\mathcal{O}$ is derived in the following theorem.
\begin{theorem}\label{theorem_cop}
The covert outage probability for the CAS is given by
\begin{equation}\label{oco_expression}
\mathcal{O} = 1 - \left(\int_0^{2\pi}\int_0^D \left(1-{\bar p}_{w}\right) \frac{r}{\pi D^2}drd\theta\right)^N,
\end{equation}
where the parameter $r_{a,w}$ in ${\bar p}_{w}$ is expressed as $r_{a,w} = \sqrt{r_{a,o}^2+r^2 - 2r_{a,o}r\cos(\theta_{a,o} - \theta)}$.
\end{theorem}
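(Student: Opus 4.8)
The plan is to begin from the reformulation in \eqref{oco}, $\mathcal{O} = 1 - \mathbb{E}_{\Phi_W}\big[\prod_{\mathbb{L}_w\in\Phi_W}(1-\bar p_{w})\big]$, and to evaluate the outer expectation over the BPP $\Phi_W$. The crucial observation is that the quantity $\bar p_{w}$ delivered by Theorem \ref{theorem_average_pe} has already been averaged over the fading-induced randomness of $S_{w}$ and $I_{w}$, so that the sole remaining source of randomness inside the expectation is the set of $N$ Willie locations $\{\mathbb{L}_{w}\}$. By the definition of the BPP, these locations are i.i.d. and uniform over $\mathcal{B}(o,D)$, and each factor $1-\bar p_{w}$ depends only on its own Willie's position through the distance $r_{a,w}$; hence the factors are i.i.d. and the expectation of the product factorizes into $\big(\mathbb{E}_{\mathbb{L}_{w}}[1-\bar p_{w}]\big)^{N}$.

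First I would make this factorization explicit, leaning on the non-collusion assumption already invoked for the second equality of \eqref{oco} to ensure that the joint detection event genuinely decouples across Willies. Next I would evaluate the single-Willie expectation $\mathbb{E}_{\mathbb{L}_{w}}[1-\bar p_{w}]$ by integrating against the spatial density of a uniformly placed point. In the polar coordinate system centered at Bob the area element is $r\,dr\,d\theta$ while the disc area is $\pi D^{2}$, so the location PDF is $\frac{r}{\pi D^{2}}$ on $[0,D]\times[0,2\pi)$; this converts the expectation into $\int_0^{2\pi}\int_0^D (1-\bar p_{w})\frac{r}{\pi D^{2}}\,dr\,d\theta$, which upon raising to the $N$-th power and substituting into \eqref{oco} yields \eqref{oco_expression}.

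It remains to justify the geometric expression for $r_{a,w}$. For the CAS all of Alice's antennas share the position $\mathbb{L}_{a}=(r_{a,o},\theta_{a,o})$, and the Willie under consideration sits at $(r,\theta)$; applying the law of cosines to the triangle whose third vertex is Bob at the origin gives $r_{a,w}=\sqrt{r_{a,o}^{2}+r^{2}-2r_{a,o}r\cos(\theta_{a,o}-\theta)}$, the dependence that must be inserted into $\bar p_{w}$ before the integral is carried out.

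The only genuinely delicate step is the factorization itself: one must recognize that every per-Willie stochastic effect from fading and interference has already been absorbed into the location-dependent deterministic quantity $\bar p_{w}$, so that it is the independence of the $N$ uniform positions---together with the stated non-collusion assumption that removes any residual coupling through the common interferer field $\Phi_J$---that licenses replacing $\mathbb{E}_{\Phi_W}[\prod(\cdot)]$ by a single expectation raised to the power $N$. The remaining operations, a change to polar coordinates and the law of cosines, are routine.
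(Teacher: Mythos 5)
Your proposal is correct and follows essentially the same route as the paper's proof: both invoke the i.i.d. property of the BPP to factor $\mathbb{E}_{\Phi_W}\left[\prod_{\mathbb{L}_w\in\Phi_W}(1-\bar p_w)\right]$ into the $N$-th power of a single-location expectation, then evaluate that expectation against the uniform polar density $f_{r_{w,o},\theta_{w,o}}(r,\theta)=\frac{r}{\pi D^2}$ over $\mathcal{B}(o,D)$. Your additional remarks---the law-of-cosines derivation of $r_{a,w}$ and the observation that the non-collusion assumption in \eqref{oco} is what suppresses the residual coupling of the Willies' detections through the common interferer field $\Phi_J$---are sound and, if anything, make explicit two points the paper leaves implicit.
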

\begin{proof}
Note that the distribution of the $N$ wardens' locations obeys a BPP. Due to the i.i.d. property of a BPP, the covert outage probability $\mathcal{O}$ given in \eqref{oco} is equivalently transformed to 
\begin{align}\label{oco_cal}
\mathcal{O} &= 1 -  \prod_{n=1}^N\mathbb{E}_{r_{a,w}}\left[1-{\bar p}_{w}\right]\nonumber\\
&= 1 - \prod_{n=1}^N\int_0^{2\pi}\int_0^D \left(1-{\bar p}_{w}\right) f_{r_{w,o},\theta_{w,o}}(r,\theta)drd\theta,
\end{align}
where $f_{r_{w,o},\theta_{w,o}}(r,\theta)= \frac{r}{\pi D^2}$ and substituting it into \eqref{oco_cal} completes the proof.
\end{proof}

{Although \eqref{oco_expression} involves two nested integrals, the integral interval is finite and thus the integral is practically not difficult to be numerically evaluated. }

\subsection{Optimal Detection Threshold from Willie's Viewpoint }
From a robust design perspective, the worst case of the covert communication between Alice and Bob is examined. In particular, the optimal detection threshold $\xi$ is designed from Willie's point of view which results in a maximal covert outage probability $\mathcal{O}$. In addition, the worst-case covert communication scenario should consider each Willie can adjust his detection threshold based on the distance between himself and Alice for improving detection accuracy. 

Since $\mathcal{O}$ in \eqref{oco_expression} increases with ${\bar p}_{w}$, to maximize $\mathcal{O}$ only requires to maximize ${\bar p}_{w}$ for each realization of Willie's location $\mathbb{L}_w$. 
{Theoretically, neither a too small nor a too large detection threshold $\xi$ is beneficial for detection, and there is an optimal $\xi$ that yields a maximal $\bar p_w$. 
Besides, this property should be  irrelevant to the path-loss exponent $\alpha$.
Hence, for mathematical tractability, only the special case with $\alpha=4$ is considered, and the following theorem provides the optimal detection threshold $\xi$ that maximizes ${\bar p}_{w}^{\alpha=4}$.}
\begin{theorem}\label{theorem_opt_xi}
	The average detection probability ${\bar p}_{w}^{\alpha=4}$ in \eqref{pe_4} initially increases and then decreases with the threshold $\xi$; the maximal ${\bar p}_{w}^{\alpha=4}$ is achieved at $\xi=\xi_o$ and is given by
	\begin{equation}\label{opt_pe}
	{\bar p}_{w,max}^{\alpha=4}={A}{B^{-1}}e^{-\frac{A^2}{\xi_o}}/\sqrt{\pi \xi_o^3},
	\end{equation} 
	where $\xi_o$ is the unique root of $\xi>0$ to the following equation,
	\begin{equation}\label{opt_xi}
{A}e^{-\frac{A^2}{\xi}}/{\sqrt{\pi \xi^3}}-Be^{-B\xi}Y(\xi)+B{\rm erf}({A}/{\sqrt{\xi}})= 0,
	\end{equation} 
with $A$ and $B$ defined in \eqref{pe_4}, and  $Y(\xi)=1+\int_0^{\xi}{\rm erf}\left({A}/{\sqrt{y}}\right)Be^{By}dy$.
\end{theorem}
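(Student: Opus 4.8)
The plan is to treat the statement as a first-derivative analysis of the single-variable function ${\bar p}_w^{\alpha=4}(\xi)$, which I abbreviate as $p(\xi)$ below; the only genuine difficulty is the uniqueness of the stationary point and the global increase-then-decrease shape. Writing $Y(\xi)=1+\int_0^\xi{\rm erf}(A/\sqrt y)Be^{By}\,dy$ as in the statement, so that $p(\xi)=e^{-B\xi}Y(\xi)-{\rm erf}(A/\sqrt\xi)$, I would first differentiate. The integral term is handled by the fundamental theorem of calculus, $Y'(\xi)={\rm erf}(A/\sqrt\xi)Be^{B\xi}$, so that $e^{-B\xi}Y'(\xi)=B\,{\rm erf}(A/\sqrt\xi)$; the error-function term gives, by the chain rule, $\frac{d}{d\xi}{\rm erf}(A/\sqrt\xi)=-Ae^{-A^2/\xi}/\sqrt{\pi\xi^3}$. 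Collecting terms yields the compact identity
\[
p'(\xi)=-B\,p(\xi)+Q(\xi),\qquad Q(\xi)\triangleq\frac{Ae^{-A^2/\xi}}{\sqrt{\pi\xi^3}},
\]
and setting the right-hand side to zero reproduces \eqref{opt_xi} verbatim, so every interior stationary point satisfies the claimed equation. The maximal value \eqref{opt_pe} then follows with no integration: at a stationary point $\xi_o$ the identity gives $B\,p(\xi_o)=Q(\xi_o)$, i.e. $p(\xi_o)=Q(\xi_o)/B=AB^{-1}e^{-A^2/\xi_o}/\sqrt{\pi\xi_o^3}$, which is exactly \eqref{opt_pe}. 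The key is that the relation $p'=-Bp+Q$ lets me eliminate the awkward $e^{-B\xi}Y(\xi)$ term algebraically.

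The hard part is showing that the stationary point is unique and is a maximum, which I would argue in three steps. First, the boundary behaviour: as $\xi\to0^+$ both $Y(\xi)\to1$ and $e^{-B\xi}\to1$ while ${\rm erf}(A/\sqrt\xi)\to1$, so $p(0^+)=0$; as $\xi\to\infty$, using ${\rm erf}(A/\sqrt\xi)\sim 2A/\sqrt{\pi\xi}$ together with the matching tail asymptotics of $e^{-B\xi}Y(\xi)$, the two $O(\xi^{-1/2})$ contributions cancel and $p(\infty)=0$; and since $p(\xi)=\mathbb{P}\{I_w\le\xi<S_w+I_w\}>0$ for every $\xi>0$ by the full support of $S_w$ and $I_w$, the function must attain an interior maximum, so at least one stationary point exists. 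Second, differentiating the identity once more gives $p''=-Bp'+Q'$, so at any stationary point $\xi^*$ one has $p''(\xi^*)=Q'(\xi^*)$; logarithmic differentiation yields $Q'(\xi)/Q(\xi)=(2A^2-3\xi)/(2\xi^2)$, whence $Q'(\xi^*)>0$ for $\xi^*<2A^2/3$ and $Q'(\xi^*)<0$ for $\xi^*>2A^2/3$. Therefore every strict local maximum of $p$ lies in $(2A^2/3,\infty)$ and every strict local minimum in $(0,2A^2/3)$.

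Finally I would close with an interval-separation argument. Since $p(0^+)=0$ and $p>0$, the function rises off the origin, so the smallest stationary point is a maximum and, by the sign rule, exceeds $2A^2/3$. Consecutive stationary points must alternate between maxima and minima, because $p$ is monotone between them; hence a hypothetical second stationary point would have to be a minimum lying above $2A^2/3$, which is impossible since minima are confined to $(0,2A^2/3)$. It follows that $\xi_o$ is the unique stationary point, $p$ increases on $(0,\xi_o)$ and decreases on $(\xi_o,\infty)$, and $\xi_o$ is the unique positive root of \eqref{opt_xi}. The step demanding the most care is precisely this separation argument, together with the two vanishing boundary limits --- especially the $\xi\to\infty$ cancellation, which requires the ${\rm erfc}$ tail expansion --- and the knife-edge case $\xi^*=2A^2/3$, where the second-derivative test degenerates and must be dispatched separately (e.g. by noting it cannot disrupt the ordering that forces uniqueness).
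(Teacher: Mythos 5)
Your proposal is correct and takes essentially the same route as the paper's proof: the identity $p'(\xi)=-B\,p(\xi)+Q(\xi)$ is a compact rewriting of the paper's derivative expression, the maximal value follows by the same substitution of the stationarity condition, the second-derivative sign criterion (governed by $2A^2-3\xi$, i.e., the paper's $A^2-\tfrac{3}{2}\xi_o$) is identical, and your interval-separation argument is the same alternation-of-extrema contradiction the paper uses for uniqueness. The only cosmetic differences are that you establish existence from the boundary values $p(0^+)=p(\infty)=0$ together with positivity of $p$ (the paper instead uses the signs of $p'$ at $0$ and $\infty$), and that you explicitly flag the degenerate case $\xi^\ast=2A^2/3$, which the paper glosses over.
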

\begin{proof}
	Please refer to Appendix \ref{appendix_theorem_opt_xi}.
\end{proof}

The first-increasing-then-decreasing trend of ${\bar p}_{w}$ with respect to (w.r.t.) $\xi$ is validated in Fig. \ref{ADP}.
Let $Y_1(\xi)$ denote the left-hand side of \eqref{opt_xi} such that $Y_1(\xi_o)=0$. As Appendix \ref{appendix_theorem_opt_xi} indicates, $Y_1(\xi)$ is first positive and then negative as $\xi$ grows from zero to infinity, then $\xi_o$ can be efficiently calculated via a bisection search with \eqref{opt_xi}.
Using the derivative rule for implicit functions \cite{Zheng2015Multi} with $Y_1(\xi_o)=0$ yields the derivative $\frac{d\xi_o}{dA}= -\frac{\partial Y_1(\xi_o)/\partial A}{\partial Y_1(\xi_o)/\partial \xi_o}>0$, namely, $\xi_o$ increases with $A$. 
Likewise, $\xi_o$ decreases with $B$. Since $A=\frac{\pi^2\lambda_J\sqrt{P_J}}{4}$ and $B = \frac{r_{a,w}^{4}}{P_A}$, it is inferred that $\xi_o$ increases with $\lambda_J$, $P_J$, and $P_A$. This suggests that Willie would enlarge the detection threshold when the interferer density, the interfering power, or Alice's transmit power increases, since only in this way can Willies distinguish Alice's signals from the interference more accurately.
It is easy to confirm that the above properties regarding $\xi_o$ for $\alpha=4$ are still valid for more general cases in which the optimal detection threshold can be obtained via an exhaustive search.

Although it is rather troublesome to exhibit $\bar p_{w,max}$ explicitly with $P_A$ due to the implicit form of $\xi_o$, the monotonicity of $\bar p_{w,max}$ w.r.t. $P_A$ is still revealed by the following corollary.
\begin{corollary}\label{corollary_pw_pa}
	The maximal average detection probability $\bar p_{w,max}$ for the worst-case covert communication monotonically increases with Alice's transmit power $P_A$.
\end{corollary}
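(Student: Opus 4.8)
The plan is to exploit the fact that, by definition, $\bar p_{w,max}$ is the pointwise maximum over the threshold $\xi$ of the average detection probability, and to sidestep entirely the implicit dependence of the optimizer $\xi_o$ on $P_A$. Making the dependence explicit, write $\bar p_w(\xi,P_A)$. The key observation is the monotone-max principle: if $\bar p_w(\xi,P_A)$ is strictly increasing in $P_A$ for \emph{every fixed} $\xi>0$, then so is its maximum over $\xi$. Indeed, for any $P_A'<P_A''$ and any $\xi$ we have $\bar p_w(\xi,P_A')\le\bar p_w(\xi,P_A'')\le\bar p_{w,max}(P_A'')$, and taking the supremum over $\xi$ on the left yields $\bar p_{w,max}(P_A')\le\bar p_{w,max}(P_A'')$. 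Strictness comes for free by evaluating at the specific maximizer $\xi_o'$ for $P_A'$: $\bar p_{w,max}(P_A')=\bar p_w(\xi_o',P_A')<\bar p_w(\xi_o',P_A'')\le\bar p_{w,max}(P_A'')$, where the strict middle inequality uses only pointwise strict monotonicity at the single threshold $\xi_o'$. This reduces the corollary to a fixed-threshold statement, so the analysis of how $\xi_o$ shifts with $P_A$ (via \eqref{opt_xi}) is never needed.

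It then remains to establish that $\bar p_w(\xi,P_A)$ strictly increases in $P_A$ for fixed $\xi>0$. I would start from the decomposition already recorded in \eqref{average_pe}, $\bar p_w=\mathcal{F}_{I_w}(\xi)-\mathcal{F}_{S_w+I_w}(\xi)$. The interference term $\mathcal{F}_{I_w}(\xi)$ carries no dependence on $P_A$, so the entire $P_A$-dependence resides in $\mathcal{F}_{S_w+I_w}(\xi)$, which only needs to be shown strictly decreasing in $P_A$. Since $S_w$ is exponential with $f_{S_w}(x)=\frac{r_{a,w}^{\alpha}}{P_A}e^{-r_{a,w}^{\alpha}x/P_A}$, its mean $P_A/r_{a,w}^{\alpha}$ is proportional to $P_A$, and $\mathcal{F}_{S_w}(x)=1-e^{-r_{a,w}^{\alpha}x/P_A}$ is strictly decreasing in $P_A$ for each fixed $x>0$, i.e., $S_w$ is first-order stochastically increasing in $P_A$. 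Conditioning on $I_w=t$ gives $\mathbb{P}\{S_w+I_w\le\xi\mid I_w=t\}=\mathcal{F}_{S_w}(\xi-t)$, which is strictly decreasing in $P_A$ for every $t<\xi$; averaging over $I_w$ (equivalently using the convolution form $\mathcal{F}_{S_w+I_w}(\xi)=\int_0^{\xi}\mathcal{F}_{S_w}(\xi-t)f_{I_w}(t)\,dt$, consistent with \eqref{average_pe}) preserves this strict monotonicity as long as $\mathbb{P}\{I_w<\xi\}>0$. Substituting back shows $\bar p_w$ strictly increases with $P_A$, which completes the reduction.

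I expect no genuine obstacle here; the argument is robust precisely because it avoids differentiating the implicit-form expression \eqref{opt_pe}. Two points merit care. First, strictness hinges on the exponential CDF being strictly monotone in its mean at every positive argument, so the positivity of $\xi$ guarantees the conditional integrand strictly decreases on a set of positive $I_w$-measure. Second, it is worth emphasizing in the writeup that this route is independent of both the antenna number $M$ (the established fact that $S_w$ stays exponential under MRT regardless of $M$ already absorbs that) and the path-loss exponent $\alpha$, so the explicit $\alpha=4$ formulas \eqref{pe_4}–\eqref{opt_pe} are not invoked at all. The main alternative — an envelope/Danskin argument differentiating \eqref{opt_pe} and invoking the stationarity condition \eqref{opt_xi} — is feasible but forces the messier calculation that the pointwise-monotonicity argument is designed to avoid.
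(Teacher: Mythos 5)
Your proposal is correct and follows essentially the same route as the paper: the paper's proof uses the identical two-inequality chain $\bar p_w(P_{A,2},\xi_{o,2})\ge\bar p_w(P_{A,2},\xi_{o,1})>\bar p_w(P_{A,1},\xi_{o,1})$, reducing the claim to fixed-threshold monotonicity of $\bar p_w$ in $P_A$. Your stochastic-dominance/convolution argument merely fills in that last pointwise step, which the paper disposes of by noting that increasing $P_A$ enlarges $S_w$ and hence the detection event $[I_w,S_w+I_w)$ in \eqref{average_pe}.
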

\begin{proof}
	Consider Alice's transmit power $P_{A,1}$ and $P_{A,2}$ with $P_{A,2}>P_{A,1}$, and let $\xi_{o,1}$ and $\xi_{o,2}$ be the optimal detection thresholds maximizing $\bar p_w$ for $P_{A,1}$ and $P_{A,2}$, respectively. When $P_A$ increases, $S_w$ increases and then the feasible region of $\xi$, i.e., $[I_w,S_w+I_w)$, is enlarged.
	Hence, $\bar p_w(P_{A,2}, \xi_{o,2})\ge\bar p_w(P_{A,2}, \xi_{o,1})>\bar p_w(P_{A,1}, \xi_{o,1})$, where the second inequality holds since  $\bar p_w$ increases with $P_A$ for a constant $\xi$ as shown in \eqref{average_pe}. This completes the proof.
\end{proof}
 
Fig. \ref{OPT_ADP} illustrates how the average detection probability $\bar p_w$ is affected by the interferer density $\lambda_J$ and Alice's transmit power $P_A$. The optimal detection threshold $\xi_o$ is shown to significantly improve $\bar p_{w}$ compared with a constant $\xi$. 
It is observed that $\bar p_{w}$ increases with $P_A$ and decreases with $\lambda_J$ when the optimal $\xi_o$ is used for detectors. This demonstrates the harmfulness of high transmit power for covert communications, whereas the covertness performance indeed can be improved by introducing co-channel interference. Moreover, the monotonicity of the optimal $\xi_o$ w.r.t. $\lambda_J$ or $P_A$ is also confirmed in Fig. \ref{OPT_ADP} (see the circle dots in the figure). 

\begin{figure}[!t]
	\centering
	\includegraphics[width = 3.8in]{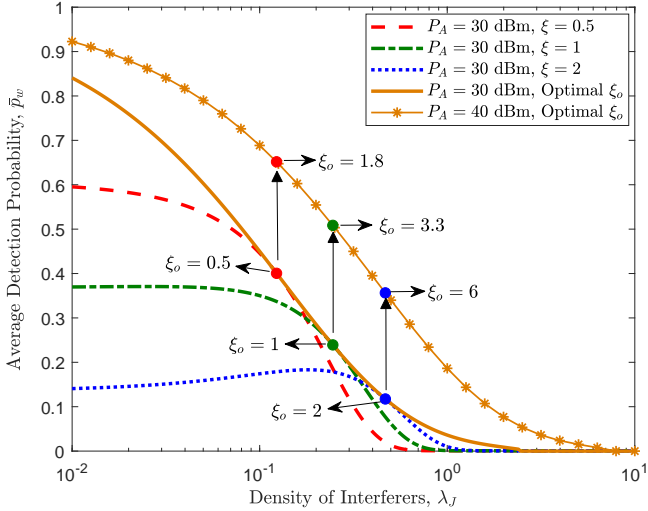}
	\caption{$\bar p_{w}$ vs. $\lambda_J$ for different $P_A$'s, with $r_{a,w} = 1$. }
	\label{OPT_ADP}
\end{figure}

\subsection{Connectivity Probability}
A closed-form expression for the exact connectivity probability $\mathcal{C}$ in \eqref{def_oc} is provide below.
\begin{theorem}\label{theorem_cp}
	The connectivity probability for the CAS is given by
	\begin{equation}\label{oc}
\mathcal{C} =  e^{-\phi\beta^{\delta}}
	+e^{-\phi\beta^{\delta}}\sum_{m=1}^{M-1}\frac{1}{m!}\sum_{n=1}^{m}\left(
	\delta\phi\beta^{\delta}\right)^n
	\Upsilon_{m,n},
	\end{equation}
	with
	$\Upsilon_{m,n}=\sum_{\psi_j\in \mathbb{C}\binom{m-1}{m-n}}\prod_
	{\substack{
	l_{ij}\in\psi_j\\
	i=1,\cdots,m-n}
	}
	\left[l_{ij}-\delta(l_{ij}-i+1)\right]$ and $\phi \triangleq {\kappa\lambda_J P_J^{\delta} r_{a,o}^2}/{P_A^{\delta}}$.
The term $\mathbb{C}\binom{m-1}{m-n}$ denotes the set of all distinct subsets $\psi_j$ of the natural numbers $\{1,2,\cdots,m-1\}$ with cardinality $m-n$.
	The elements in each subset are arranged in an ascending order with $l_{ij}$ being the $i$-th element of $\psi_j$.
	For $m\ge1$, it is set $\Upsilon_{m,m}=1$.
\end{theorem}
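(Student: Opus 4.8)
The plan is to reduce $\mathcal{C}$ to an expectation over the aggregate interference $I_o$, evaluate it through successive derivatives of the Laplace transform of $I_o$, and finally match the resulting coefficients to the combinatorial quantity $\Upsilon_{m,n}$. First I would pin down the distribution of the desired signal power. Because Alice's antennas are co-located ($r_{a_m,o}=r_{a,o}$) and the MRT weights in \eqref{weight} combine the $M$ branches coherently, the signal power collapses to $S_o = P_A\|\bm h_{a,o}\|^2 r_{a,o}^{-\alpha}$, so that $G\triangleq\|\bm h_{a,o}\|^2$ is a sum of $M$ i.i.d. unit-mean exponential variables, i.e., a normalized Gamma variable of integer shape $M$. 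Dropping the noise as stated, \eqref{def_oc} becomes $\mathcal{C}=\mathbb{P}\{G\ge c\,I_o\}$ with $c\triangleq\beta r_{a,o}^{\alpha}/P_A$. Conditioning on $I_o$ and inserting the finite-sum form of the Gamma complementary CDF, $\mathbb{P}\{G\ge t\}=e^{-t}\sum_{m=0}^{M-1}t^m/m!$, I would interchange the finite sum with the expectation to obtain
\begin{equation}
\mathcal{C}=\sum_{m=0}^{M-1}\frac{c^m}{m!}\,\mathbb{E}_{I_o}\!\left[I_o^m e^{-cI_o}\right].
\end{equation}

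The next step turns each moment-weighted expectation into a derivative of the interference transform. By stationarity of the PPP $\Phi_J$, $I_o$ inherits the transform of Lemma \ref{lemma_laplace_i}, namely $\mathcal{L}_{I_o}(s)=e^{-\phi' s^{\delta}}$ with $\phi'\triangleq\kappa\lambda_J P_J^{\delta}$, and $\mathbb{E}_{I_o}[I_o^m e^{-cI_o}]=(-1)^m\mathcal{L}_{I_o}^{(m)}(c)$. I would then differentiate this stretched exponential by positing the form $\mathcal{L}_{I_o}^{(m)}(s)=\mathcal{L}_{I_o}(s)\sum_{n=1}^{m}d_{m,n}(-\phi'\delta)^n s^{n\delta-m}$ and differentiating once more to establish the recursion $d_{m+1,n}=d_{m,n-1}+(n\delta-m)\,d_{m,n}$ with $d_{m,m}=1$. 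Evaluating at $s=c$ and using $\phi' c^{\delta}=\phi\beta^{\delta}$ (which follows from $\alpha\delta=2$) cancels all stray powers of $c$, giving $\mathcal{C}=e^{-\phi\beta^{\delta}}\sum_{m=0}^{M-1}\frac{1}{m!}\sum_{n=1}^{m}(\delta\phi\beta^{\delta})^n(-1)^{m+n}d_{m,n}$, where the $m=0$ term produces the standalone leading factor $e^{-\phi\beta^{\delta}}$.

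The crux, and the step I expect to be the main obstacle, is identifying $(-1)^{m+n}d_{m,n}$ with the subset sum $\Upsilon_{m,n}$. I would prove this by induction: rewriting the recursion for $d_{m,n}$ in terms of $\Upsilon_{m,n}\triangleq(-1)^{m+n}d_{m,n}$ yields $\Upsilon_{m+1,n}=\Upsilon_{m,n-1}+(m-n\delta)\Upsilon_{m,n}$, and I would verify that the subset-based definition obeys the same relation by splitting the sum over the subsets of $\{1,\dots,m\}$ of cardinality $m-n+1$ according to whether the largest element $m$ lies in the subset. The subsets omitting $m$ reproduce $\Upsilon_{m,n-1}$ term by term, while those containing $m$ place it in the last (largest) ascending position $i=m-n+1$, factoring out $l_{ij}-\delta(l_{ij}-i+1)=m-\delta\big(m-(m-n+1)+1\big)=m-n\delta$ and leaving exactly $\Upsilon_{m,n}$, which closes the induction after checking the base case $\Upsilon_{m,m}=1$. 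The only delicate point is the bookkeeping of the ascending-order index $i$ inside the product $\prod_i[l_{ij}-\delta(l_{ij}-i+1)]$; once $m$ is appended as the top entry its contribution is precisely the claimed factor, and substituting $\Upsilon_{m,n}$ back into the collected sum completes the proof.
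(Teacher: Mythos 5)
Your proposal is correct, and its skeleton coincides with the paper's proof: both condition on $I_o$, expand the gamma complementary CDF of $\|\bm h_{a,o}\|^2$ into the finite sum $e^{-t}\sum_{m=0}^{M-1}t^m/m!$, and convert each term $\mathbb{E}_{I_o}\!\left[I_o^m e^{-sI_o}\right]$ into $(-1)^m$ times the $m$-th derivative of the Laplace transform $\mathcal{L}_{I_o}(s)=e^{-\kappa\lambda_J P_J^{\delta}s^{\delta}}$. The divergence is in the final step: the paper simply invokes \cite[Theorem 1]{Hunter08Transmission} to evaluate those derivatives and produce the coefficients $\Upsilon_{m,n}$, whereas you prove that identity from scratch --- positing the ansatz $\mathcal{L}_{I_o}^{(m)}(s)=\mathcal{L}_{I_o}(s)\sum_{n=1}^{m}d_{m,n}(-\phi'\delta)^n s^{n\delta-m}$, extracting the recursion $d_{m+1,n}=d_{m,n-1}+(n\delta-m)d_{m,n}$, and showing by a largest-element split of the subsets of $\{1,\dots,m\}$ that $(-1)^{m+n}d_{m,n}$ satisfies the same recursion and base cases as the subset sum $\Upsilon_{m,n}$ (your sign conversion and the factor $m-n\delta$ for the top element both check out, as does the boundary convention $\Upsilon_{m,0}=0$, $\Upsilon_{m,m}=1$). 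What the paper's route buys is brevity; what yours buys is a self-contained verification of precisely the combinatorial structure that the theorem statement asserts, which is arguably the real content of the result and makes the proof independent of an external reference. The only cosmetic blemish is the $m=0$ term in your collected sum, whose inner sum $\sum_{n=1}^{0}$ is empty and should be stated explicitly as contributing the standalone factor $e^{-\phi\beta^{\delta}}$ (as you note in passing) rather than being folded into the double sum.
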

\begin{proof}
	Let $s\triangleq \frac{r_{a,o}^{\alpha}\beta}{P_A}$, and $\mathcal{C}$ in \eqref{def_oc} is calculated as follows:
	\begin{align}\label{oc_cal}
	\mathcal{C}&= \mathbb{E}_{I_o} \left[ \mathbb{P}\left\{S_o\geq \beta I_o\right\}\right]= \mathbb{E}_{I_o} \left[ \mathbb{P}\left\{\|\bm h_{a,o}\|^2\geq sI_o\right\}\right]
	\nonumber\\&\stackrel{\mathrm{(c)}}
	= \mathbb{E}_{I_o}\left[e^{-sI_o}
	\sum_{m=0}^{M-1}\frac{s^mI_o^m}
	{m!}\right] =\sum_{m=0}^{M-1}\mathbb{E}_{I_o} \left[
	\frac{s^me^{-sI_o}}
	{m!}I_o^m\right]\nonumber\\
	&
	\stackrel{\mathrm{(d)}}= \sum_{m=0}^{M-1}\left[
	\frac{(-s)^m}{m!}\frac{d^m \mathcal{L}_{I_o}(s)}{ds^m}\right],
	\end{align}
	where $\rm (c)$ holds since $\|\bm h_{a,o}\|^2$ is a normalized gamma random variable with the shape parameter $M$ and $\rm (d)$ is due to $\frac{(-1)^md^m \mathcal{L}_{I_o}(s)}{ds^m}=\mathbb{E}_{I_o} \left[
	I_o^me^{-sI_o}\right]$, where $\mathcal{L}_{I_o}(s)$ has the same form as $\mathcal{L}_{I_w}(s)$ in \eqref{laplace_i}. Substituting \eqref{laplace_i} into \eqref{oc_cal} and invoking \cite[Theorem 1]{Hunter08Transmission} complete the proof.
\end{proof}

{The correctness of Theorem \ref{theorem_cp} is validated by Fig. \ref{CP} in Sec. V.
The first part $e^{-\phi\beta^{\delta}}$ in $\mathcal{C}$ arises from a single antenna, and the second part is attributed to the deployment of multiple antennas. If adding one more antenna, $\mathcal{C}$ increases $\frac{1}{M!}\sum_{n=1}^{M}\left(
\delta\phi\beta^{\delta}\right)^n
\Upsilon_{M,n}$, but the increment becomes insignificant when $M$ is sufficiently large. This implies, it is unnecessary to employ antennas excessively, and a favorable reliability still can be achieved.} 

\subsection{Covert Throughput Maximization}
This subsection optimizes the covert throughput $\mathcal{T}=\mathcal{C}R$ subject to a covert outage probability constraint $\mathcal{O}\le\epsilon$.  The optimization problem is formulated as follows:
\begin{equation}\label{ct_max}
\max_{P_A>0,R>0}\mathcal{T}=\mathcal{C}R,
~{\rm s.t.} ~\mathcal{O}\le\epsilon.
\end{equation}
{Since $\mathcal{C}$ is a function of both $P_A$ and $R$ but $\mathcal{O}$ is independent of $R$, in order to maximize $\mathcal{T}$, the following equivalent transformation is carried on
according to \cite[Sec. 4.1.3]{Boyd2004Convex},
\begin{equation}\label{ct_max_equal}
\max_{P_A>0,R>0}\mathcal{C}(P_A,R)R=\max_{R>0}\left(R\max_{P_A>0}\mathcal{C}(P_A,R)\right).
\end{equation}
This transformation decomposes the primary problem \eqref{ct_max} into two steps: first maximizing $\mathcal{C}$ over $P_A$ constrained by $\mathcal{O}\le\epsilon$ with a fixed $R$, and then maximizing $\mathcal{T}=\mathcal{C}R$ over $R$.} In what follows, the optimization procedure is performed step by step.

\subsubsection{Optimal $P_A$}

Since both $\mathcal{C}$ and $\mathcal{O}$ increase with $P_A$, the optimal $P_A^*$ maximizing $\mathcal{C}$ is the maximal $P_A$ satisfying $\mathcal{O}(P_A) \le \epsilon$, which is $P_A^*=P_{max} \triangleq  \mathcal{O}^{-1}(\epsilon)$, where $\mathcal{O}^{-1}(\epsilon)$ denotes the inverse function of $\mathcal{O}(P_A)$. 
The following corollary develops some properties regarding $P_{max}$.
\begin{corollary}\label{corollary_opt_pa}
	The maximal transmit power $P_{max}$  is independent of the number of transmit antennas $M$, increases with the covert outage probability threshold $\epsilon$, decreases with the number of wardens $N$, and is proportional to $\lambda_J^{\alpha/2} P_J$, i.e., $P_{max}\propto\lambda_J^{\alpha/2} P_J$, where $\lambda_J$ and $P_J$ are the density of interferers and the interfering power, respectively.
\end{corollary}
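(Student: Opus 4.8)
The plan is to treat $P_{max}$ as the quantity defined implicitly by $\mathcal{O}(P_{max})=\epsilon$, with $\mathcal{O}$ given by Theorem~\ref{theorem_cop}, and to reduce all four assertions to monotonicity and scaling properties of that equation. The three qualitative claims come almost for free from facts already in place. Because the PDF $f_{S_w}$ following \eqref{average_pe} does not involve $M$ and $I_w$ is independent of $M$, the average detection probability $\bar p_w$, and hence $\mathcal{O}$ in \eqref{oco_expression}, is independent of $M$; therefore so is $P_{max}=\mathcal{O}^{-1}(\epsilon)$. For the dependence on $\epsilon$, I would reuse the monotonicity that already defines $P_{max}$: $\mathcal{O}(P_A)$ is strictly increasing in $P_A$ (equivalently $\bar p_w$ increases with $P_A$, cf. Corollary~\ref{corollary_pw_pa}), so $\mathcal{O}^{-1}$ is increasing and $P_{max}$ increases with $\epsilon$.

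For the dependence on $N$, I would rewrite \eqref{oco_expression} as $\mathcal{O}=1-q(P_A)^N$ with $q(P_A)\triangleq\int_0^{2\pi}\int_0^D(1-\bar p_w)\frac{r}{\pi D^2}\,dr\,d\theta\in(0,1)$. For fixed $P_A$, since $q<1$ the term $q^N$ is decreasing in $N$, so $\mathcal{O}$ increases with $N$. Combining this with the monotonicity of $\mathcal{O}$ in $P_A$ and applying it to $\mathcal{O}(P_{max},N)=\epsilon$, an increase of $N$ raises $\mathcal{O}$ above $\epsilon$ at the old $P_{max}$, so restoring equality forces $P_{max}$ to decrease; equivalently, treating $N$ as continuous, $dP_{max}/dN=-(\partial\mathcal{O}/\partial N)/(\partial\mathcal{O}/\partial P_A)<0$.

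The substantive claim is the scaling $P_{max}\propto\lambda_J^{\alpha/2}P_J$, and this is where the main work lies. The key fact I would establish is that, after the worst-case maximization over the detection threshold, $\bar p_w$ depends on $(\lambda_J,P_J,P_A)$ only through the single combination $\eta\triangleq\lambda_J^{\alpha/2}P_J/P_A$. To see this I would start from \eqref{average_pe} and write $\mathcal{F}_{I_w}(x)=\mathcal{F}_W(x/a)$, where $a\triangleq(\kappa\lambda_J P_J^{\delta})^{1/\delta}=\kappa^{1/\delta}\lambda_J^{\alpha/2}P_J$; this follows from Lemma~\ref{lemma_laplace_i} by representing $I_w\stackrel{d}{=}aW$ for a standard stable variable $W$ with $\mathcal{L}_W(s)=e^{-s^{\delta}}$. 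Writing $f_{S_w}(x)=\frac{1}{b}e^{-x/b}$ with $b\triangleq P_A/r_{a,w}^{\alpha}$ and performing the changes of variable $\xi=a\tilde\xi$ and $x=av$ in the convolution integral of \eqref{average_pe}, the integral collapses to a function of $(a/b,\tilde\xi)$ alone, i.e. $\bar p_w=\Psi(a/b,\tilde\xi)$ with $\tilde\xi=\xi/a$. For fixed parameters the map $\xi\mapsto\xi/a$ is a bijection of $(0,\infty)$, so maximizing over $\xi$ eliminates the second argument and yields $\bar p_{w,\max}=\Psi^{*}(a/b)$. Since $a/b=\kappa^{1/\delta}r_{a,w}^{\alpha}\,\eta$, this shows $\bar p_{w,\max}$ depends on $(\lambda_J,P_J,P_A)$ only through $\eta$ and the geometry $r_{a,w}$.

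Finally I would feed this back into \eqref{oco_expression}. Because $r_{a,w}=\sqrt{r_{a,o}^2+r^2-2r_{a,o}r\cos(\theta_{a,o}-\theta)}$ is a purely geometric function of the integration variables and of Alice's position, the worst-case $\mathcal{O}$ becomes a function of $\eta$ alone for fixed $N$, $D$ and $(r_{a,o},\theta_{a,o})$. Since $\mathcal{O}$ is strictly increasing in $P_A$ while $\eta$ is strictly decreasing in $P_A$, $\mathcal{O}$ is strictly monotone in $\eta$, so $\mathcal{O}=\epsilon$ has a unique root $\eta^{*}$ that does not depend on $\lambda_J$, $P_J$, or $P_A$ separately. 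Hence $\lambda_J^{\alpha/2}P_J/P_{max}=\eta^{*}$, giving $P_{max}=\lambda_J^{\alpha/2}P_J/\eta^{*}\propto\lambda_J^{\alpha/2}P_J$, which is the claim. The delicate point is the single-variable reduction in the third paragraph, namely that the optimal threshold rescales as $\xi_o\propto a$ so that no residual dependence on $\lambda_J,P_J,P_A$ survives the maximization; the explicit $\alpha=4$ analysis of Theorem~\ref{theorem_opt_xi}, where $\xi_o$ is shown to grow with $\lambda_J$, $P_J$ and $P_A$, corroborates this and can serve as a consistency check.
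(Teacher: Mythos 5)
Your proof is correct, and for the one substantive claim it takes a genuinely different route from the paper. The first three properties are handled in the paper exactly as you handle them (constancy of $\mathcal{O}$ in $M$, and monotonicity of $\mathcal{O}$ in $P_A$ and $N$ fed through the implicit equation $\mathcal{O}(P_{max})=\epsilon$), so the real comparison concerns the scaling $P_{max}\propto\lambda_J^{\alpha/2}P_J$. There, the paper's Appendix B deliberately restricts to $\alpha=4$: starting from the closed form \eqref{opt_pe} with $A\propto\lambda_J\sqrt{P_J}$ and $B\propto 1/P_A$, it checks that the rescaling $\left\{A,P_A,\xi_o\right\}\rightarrow\left\{\varpi A,\varpi^2 P_A,\varpi^2\xi_o\right\}$ leaves $\bar p_{w,max}$ unchanged (sufficiency), and then invokes the monotonicity of $\bar p_{w,max}$ in $P_A$ (Corollary \ref{corollary_pw_pa}) to get necessity, concluding that $A^2/P_A=\lambda_J^2 P_J/P_A$ is pinned down by the value of $\bar p_{w,max}$. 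Your argument instead works for arbitrary $\alpha$: the stable-law representation $I_w\stackrel{d}{=}aW$ with $a\propto\lambda_J^{\alpha/2}P_J$ (read off from Lemma \ref{lemma_laplace_i}), combined with the substitutions $\xi=a\tilde\xi$, $x=av$ in \eqref{average_pe}, collapses $\bar p_w$ to $\Psi(a/b,\tilde\xi)$, and the bijection $\xi\mapsto\xi/a$ makes the threshold maximization yield a function of $a/b\propto r_{a,w}^{\alpha}\,\eta$ alone; this single reduction subsumes both the sufficiency and necessity halves of the paper's scaling argument, while you still use Corollary \ref{corollary_pw_pa}, exactly as the paper does, to obtain strict monotonicity and hence uniqueness of the root $\eta^{*}$. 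What the paper's route buys is concreteness — every step is verified against an explicit formula, at the cost of proving the stated corollary only for $\alpha=4$ and gesturing at the general case. What yours buys is genuine generality in $\alpha$ plus a structural explanation of the threshold scaling $\xi_o\propto a$, of which Theorem \ref{theorem_opt_xi}'s observed monotonicity of $\xi_o$ in $\lambda_J$, $P_J$, $P_A$ is then a corollary rather than an independent check.
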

\begin{proof}
Please refer to Appendix \ref{appendix_corollary_opt_pa}.
\end{proof}

{Corollary \ref{corollary_opt_pa} suggests that, in order to confront more wardens or achieve a smaller covert outage probability, a lower transmit power should be chosen. However, after introducing multiple antennas or random interference, it is feasible to guarantee the same level of covertness with a higher transmit power while reaping a reliability gain.}


\subsubsection{Optimal $R$}
Having acquired the optimal transmit power $P_A^*$, this step determines the optimal transmission rate $R^*$ that maximizes the covert throughput $\mathcal{T}=\mathcal{C}(P_A^*)R$. 
Due to the complicated expression of $\mathcal{C}$ in \eqref{oc}, it is difficult to prove the monotonicity of $\mathcal{T}$ w.r.t. $R$. However, it is intuitive that either a too large or a too small $R$ will not yield a large $\mathcal{T}$, and the optimal $R^*$ can be obtained via an exhaustive search, which is given by 
\begin{equation}\label{ct_max1}
R^* = {\arg}\max_{R>0}\mathcal{C}(P_A^*)R.
\end{equation}

{The Diophantus equation in \eqref{oc} makes $\mathcal{C}(P_A^*)$ time-consuming to calculate and complicated to analyze. In order to circumvent such a difficulty and facilitate the subsequent optimization, a practical requirement of high reliability is considered and a computational convenient suboptimal solution to problem \eqref{ct_max1} is further  provided.}
In particular, focusing on the
large connectivity probability regime $\mathcal{C}\rightarrow 1$, the following lemma gives a compact expression for $\mathcal{C}$.
\begin{lemma}\label{lemma_cp_high}
	In the large connectivity probability regime $\mathcal{C}\rightarrow 1$, $\mathcal{C}$ in \eqref{oc} is approximated by
	\begin{equation}\label{cp_high}
{	\mathcal{ C} \approx} \left(1+ K_{\alpha,M}\phi\beta^{\delta}\right)e^{-\phi\beta^{\delta}},
	\end{equation}
	where  $K_{\alpha,M}=\delta\sum_{m=1}^{M-1}({1}/{m!})
	\prod_{l=1}^{m-1}(l-\delta)$.
\end{lemma}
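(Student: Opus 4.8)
The plan is to start from the exact expression for $\mathcal{C}$ given in \eqref{oc} and extract its behavior in the high-reliability regime where $\phi\beta^{\delta}\to 0$. Recall that $\mathcal{C}\to 1$ corresponds precisely to small $\phi\beta^{\delta}$, since $e^{-\phi\beta^{\delta}}\to 1$ in that limit. The key observation is that in \eqref{oc} the double sum carries factors of $\left(\delta\phi\beta^{\delta}\right)^n$ for $n=1,\dots,m$, so each term is of order $(\phi\beta^{\delta})^n$. To obtain a compact first-order approximation, I would retain only the terms linear in $\phi\beta^{\delta}$, i.e. keep the $n=1$ contribution from the inner sum and discard all $n\ge 2$ terms as higher-order. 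The leading constant $e^{-\phi\beta^{\delta}}$ is kept intact because it is the dominant multiplicative envelope.

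Concretely, I would write $\mathcal{C}=e^{-\phi\beta^{\delta}}\Big(1+\sum_{m=1}^{M-1}\frac{1}{m!}\sum_{n=1}^{m}\left(\delta\phi\beta^{\delta}\right)^n\Upsilon_{m,n}\Big)$ and isolate the $n=1$ piece, which produces $\sum_{m=1}^{M-1}\frac{1}{m!}\,\delta\phi\beta^{\delta}\,\Upsilon_{m,1}$. The remaining step is to evaluate $\Upsilon_{m,1}$ explicitly. By the definition in Theorem \ref{theorem_cp}, $\Upsilon_{m,1}=\sum_{\psi_j\in\mathbb{C}\binom{m-1}{m-1}}\prod_{i=1}^{m-1}\left[l_{ij}-\delta(l_{ij}-i+1)\right]$; since the only subset of $\{1,\dots,m-1\}$ of cardinality $m-1$ is the full set itself with $l_{ij}=i$, each factor simplifies to $i-\delta(i-i+1)=i-\delta$. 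Hence $\Upsilon_{m,1}=\prod_{i=1}^{m-1}(i-\delta)=\prod_{l=1}^{m-1}(l-\delta)$, which is exactly the product appearing inside $K_{\alpha,M}$. Factoring $\delta\phi\beta^{\delta}$ out of the sum then yields $\mathcal{C}\approx e^{-\phi\beta^{\delta}}\big(1+\delta\phi\beta^{\delta}\sum_{m=1}^{M-1}\frac{1}{m!}\prod_{l=1}^{m-1}(l-\delta)\big)=\big(1+K_{\alpha,M}\phi\beta^{\delta}\big)e^{-\phi\beta^{\delta}}$, matching \eqref{cp_high}.

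The main obstacle I anticipate is the combinatorial bookkeeping in pinning down $\Upsilon_{m,1}$ correctly: one must verify that the cardinality constraint $m-n=m-1$ forces the unique full subset and that, under the ascending-order indexing $l_{ij}=i$, the bracketed factor collapses cleanly to $i-\delta$. This is the step where an off-by-one error in the index ranges or a misreading of the $l_{ij}-i+1$ term could derail the argument, so I would handle it carefully, perhaps cross-checking small cases $m=1,2,3$ against the explicit expansion of \eqref{oc}. Beyond this, the approximation itself is routine: it is simply a truncation retaining terms up to first order in $\phi\beta^{\delta}$, justified because in the regime $\mathcal{C}\to 1$ the quantity $\phi\beta^{\delta}$ is small and the neglected terms scale as $(\phi\beta^{\delta})^{n}$ with $n\ge 2$. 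I would state this truncation explicitly and note that the dropped terms are $o(\phi\beta^{\delta})$, which completes the proof.
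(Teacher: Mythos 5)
Your proposal is correct and follows essentially the same route as the paper: the paper's proof likewise truncates the exact expression \eqref{oc} at first order in $\phi\beta^{\delta}$ (discarding the $\Theta(\phi^2)$ terms, i.e.\ all $n\ge 2$ contributions) in the regime $\phi\rightarrow 0$ where $\mathcal{C}\rightarrow 1$. Your explicit verification that $\Upsilon_{m,1}=\prod_{l=1}^{m-1}(l-\delta)$, which the paper leaves implicit in the definition of $K_{\alpha,M}$, is a correct and worthwhile bit of added detail.
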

\begin{proof}\label{pt_high_proof}
As $\phi\rightarrow 0\Rightarrow\mathcal{C}\rightarrow 1$, discarding the high-order terms $\Theta \left(\phi^2\right)$ in \eqref{oc} yields \eqref{cp_high}.  
\end{proof}

Note that $\phi\rightarrow 0$ reflects all cases of parameters including but not limited to $r_{a,o}$, $\lambda_J$, $P_J$, and $P_A$ which may produce a sufficiently large $\mathcal{C}$.
With \eqref{cp_high}, problem \eqref{ct_max1} is recast as below:
\begin{equation}\label{ct_max2}
\max_{\beta>0}\mathcal{T}_o(\beta)\triangleq\left(1+K_{\alpha,M}\phi_o\beta^{\delta}\right)e^{-\phi_o\beta^{\delta}} \ln (1+\beta),
\end{equation}
where $\phi_o \triangleq {\kappa\lambda_J P_J^{\delta} r_{a,o}^2}/{P_{max}^{\delta}}$.
The solution to \eqref{ct_max2} is provided by the following theorem.
\begin{theorem}\label{theorem_opt_beta}
 $\mathcal{T}_o(\beta)$ in \eqref{ct_max2} is a first-increasing-then-decreasing function of $\beta$, and the optimal $\beta_o$ maximizing $\mathcal{T}_o(\beta)$ is the unique root of $\beta>0$ to the equation $Q(\beta)=0$ with $Q(\beta)$ given by
\begin{equation}\label{opt_beta}
Q(\beta) = \frac{1+K_{\alpha,M}\phi_o\beta^{\delta}}{1+\beta}-\frac{\ln(1+\beta)\left[1+K_{\alpha,M}(\phi_o\beta^{\delta}-1)\right]}{\delta^{-1}\phi_o^{-1}\beta^{1-\delta}}.
\end{equation}  
\end{theorem}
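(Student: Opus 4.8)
The plan is to locate the stationary points of $\mathcal{T}_o(\beta)$ in \eqref{ct_max2} by direct differentiation, to identify them with the positive roots of $Q(\beta)$ in \eqref{opt_beta}, and then to show that there is exactly one such root and that it is the global maximizer. First I would differentiate the product $\mathcal{T}_o(\beta)=\bigl(1+K_{\alpha,M}\phi_o\beta^{\delta}\bigr)e^{-\phi_o\beta^{\delta}}\ln(1+\beta)$, writing $t\triangleq\phi_o\beta^{\delta}$ so that $dt/d\beta=\delta\phi_o\beta^{\delta-1}$. After applying the product rule to the three factors and collecting terms, I expect the common, strictly positive factor $e^{-\phi_o\beta^{\delta}}$ to pull out, leaving $\mathcal{T}_o'(\beta)=e^{-\phi_o\beta^{\delta}}Q(\beta)$ with $Q$ exactly as in \eqref{opt_beta}. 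Since $e^{-\phi_o\beta^{\delta}}>0$, the sign of $\mathcal{T}_o'$ equals that of $Q$, so the stationary points of $\mathcal{T}_o$ are precisely the positive roots of $Q(\beta)=0$. It is also convenient to divide by $\mathcal{T}_o>0$ and record the logarithmic-derivative form
\[
\frac{\mathcal{T}_o'(\beta)}{\mathcal{T}_o(\beta)}
=\frac{1}{(1+\beta)\ln(1+\beta)}
-\frac{\delta\phi_o\beta^{\delta-1}\bigl[1+K_{\alpha,M}(\phi_o\beta^{\delta}-1)\bigr]}{1+K_{\alpha,M}\phi_o\beta^{\delta}},
\]
whose first term is strictly decreasing from $+\infty$ to $0$; this makes the boundary behaviour transparent.

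Next I would settle existence by checking the two endpoints of $Q$. As $\beta\to0^{+}$ the first term of $Q$ tends to $1$ while the subtracted term vanishes, because $\beta^{\delta-1}\ln(1+\beta)\sim\beta^{\delta}\to0$ and the bracket $1+K_{\alpha,M}(\phi_o\beta^{\delta}-1)$ stays bounded; hence $Q(0^{+})=1>0$ and $\mathcal{T}_o$ is increasing near the origin. As $\beta\to\infty$ the first term decays like $\beta^{\delta-1}$, whereas the subtracted term exceeds it by a factor of order $\delta\phi_o\beta^{\delta}\ln\beta\to\infty$ for every $\delta\in(0,1)$; thus $Q(\beta)<0$ for all large $\beta$ and $\mathcal{T}_o$ is eventually decreasing. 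By continuity $Q$ has at least one positive root, i.e., $\mathcal{T}_o$ has at least one interior maximizer.

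The delicate part is to upgrade ``at least one root'' to ``exactly one'', equivalently to establish the claimed first-increasing-then-decreasing shape; I would do this through a second-order test at the critical points rather than by trying to prove $Q$ globally monotone, which it is not (its first term actually \emph{increases} near $\beta=0$). At any root $\beta^{*}$ of $Q$ the factorization gives $\mathcal{T}_o''(\beta^{*})=e^{-\phi_o(\beta^{*})^{\delta}}Q'(\beta^{*})$, so it suffices to prove $Q'(\beta^{*})<0$ at every root. A key simplification is that the bracket $1+K_{\alpha,M}(\phi_o(\beta^{*})^{\delta}-1)$ is necessarily positive at a root: the vanishing of $Q$ equates the positive first term with the subtracted term, and the remaining factors of the latter are positive, forcing the bracket to be positive as well. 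I would then differentiate $Q$, substitute the balance relation $Q(\beta^{*})=0$ to eliminate $\ln(1+\beta^{*})$, and reduce $Q'(\beta^{*})$ to a manifestly negative expression using $0<\delta<1$, $K_{\alpha,M}>0$, and the positivity just noted. Once every critical point is shown to be a strict local maximum, unimodality follows by a standard argument: two distinct maxima would force an intervening local minimum, i.e., an upward zero-crossing of $\mathcal{T}_o'$ at which $Q'\ge0$, contradicting the test. Hence the positive root is unique, it coincides with $\beta_o$, and $\mathcal{T}_o$ is first-increasing-then-decreasing.

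The main obstacle is exactly this second-order computation. The expression for $Q'$ is heavy, and the cancellations that render $Q'(\beta^{*})$ negative surface only after the stationarity relation is substituted, so the reduction must be organized carefully; establishing the sign of $Q'$ at the root --- rather than existence, which is routine --- is where the real work lies.
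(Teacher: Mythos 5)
Your proposal follows essentially the same route as the paper's Appendix C: factor the derivative as $\mathcal{T}_o'(\beta)=e^{-\phi_o\beta^{\delta}}Q(\beta)$, check that $Q$ is positive at $\beta\to 0^{+}$ and negative as $\beta\to\infty$, then show every root $\beta_o$ of $Q$ satisfies $\mathcal{T}_o''(\beta_o)<0$ using exactly the two facts you identify --- the positivity of $1+K_{\alpha,M}(\phi_o\beta_o^{\delta}-1)$ forced by $Q(\beta_o)=0$ and the inequality $\beta_o>\ln(1+\beta_o)$ --- which rules out multiple critical points and yields unimodality. The second-order reduction you flag as the remaining work is precisely the computation the paper carries out (its Eq.~\eqref{dt2}), and it closes with the ingredients you list, so your plan is sound and matches the published argument.
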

\begin{proof}
Please refer to Appendix \ref{appendix_theorem_opt_beta}.
\end{proof}

{Appendix \ref{appendix_theorem_opt_beta} shows that $Q(\beta)$ in \eqref{opt_beta} initially is positive and then becomes negative as $\beta$ grows from zero to infinity. Hence, the optimal $\beta_o$ can be efficiently calculated via a bisection method with the equation $Q(\beta)=0$, which is quite time-saving compared with the exhaustive search. More importantly,  some useful properties concerning $\beta_o$ are well preserved in the equation $Q(\beta)=0$ and can be easily extracted by the following corollary.}
\begin{corollary}\label{corollary_opt_beta}
	The optimal $\beta_o$ satisfying $Q(\beta_o)=0$ decreases with $\phi_o$ and increases with $K_{\alpha,M}$. 
\end{corollary}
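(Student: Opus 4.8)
The plan is to regard $\beta_o$ as an implicitly defined function of the two parameters $\phi_o$ and $K_{\alpha,M}$ through the stationarity condition $Q(\beta_o)=0$ established in Theorem \ref{theorem_opt_beta}, and then invoke the implicit function theorem. For a generic parameter $p\in\{\phi_o,K_{\alpha,M}\}$ this gives $\frac{d\beta_o}{dp}=-\left(\partial Q/\partial p\right)/\left(\partial Q/\partial\beta\right)$ evaluated at $\beta=\beta_o$. The first ingredient is the sign of the denominator: since Appendix \ref{appendix_theorem_opt_beta} shows that $Q(\beta)$ is positive for small $\beta$ and negative for large $\beta$, with $\beta_o$ its unique zero, $Q$ crosses zero from above and hence $\partial Q/\partial\beta|_{\beta_o}<0$. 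Consequently the sign of $d\beta_o/dp$ coincides with the sign of $\partial Q/\partial p|_{\beta_o}$, so it only remains to sign the two partial derivatives.

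First I would differentiate $Q$ in \eqref{opt_beta} with respect to $K_{\alpha,M}$ and to $\phi_o$ directly; each derivative is a difference of the two structurally different terms of $Q$, and its sign is not obvious on its own. The decisive step is to use the constraint $Q(\beta_o)=0$ itself to rewrite the factor $\delta\phi_o\beta_o^{\delta-1}\ln(1+\beta_o)$ as $\frac{1+K_{\alpha,M}\phi_o\beta_o^\delta}{(1+\beta_o)[1+K_{\alpha,M}(\phi_o\beta_o^\delta-1)]}$. Substituting this back into the partial derivatives cancels the awkward $\ln(1+\beta_o)$ contributions and leaves, after writing $w_o\triangleq\phi_o\beta_o^\delta$, a single rational expression whose denominator is positive and proportional to $(1+\beta_o)[1+K_{\alpha,M}(w_o-1)]$. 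I expect the $K_{\alpha,M}$-numerator to collapse to the constant $1$, so that $\partial Q/\partial K_{\alpha,M}|_{\beta_o}>0$, and the $\phi_o$-numerator to collapse to $-\big[(1+K_{\alpha,M}w_o)^2-K_{\alpha,M}\big]$.

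Two positivity facts then finish the argument. The factor $1+K_{\alpha,M}(w_o-1)$ is positive because in $Q(\beta_o)=0$ it multiplies a manifestly positive left-hand side $\frac{1+K_{\alpha,M}\phi_o\beta_o^\delta}{1+\beta_o}$; this secures $\partial Q/\partial K_{\alpha,M}|_{\beta_o}>0$, hence $\beta_o$ increases with $K_{\alpha,M}$. For the $\phi_o$ case I would establish the auxiliary bound $K_{\alpha,M}\in[0,1)$: writing $\prod_{l=1}^{m-1}(l-\delta)=\Gamma(m-\delta)/\Gamma(1-\delta)$ and summing the resulting series via $\sum_{m\ge1}\frac{\Gamma(m-\delta)}{m!}=-\Gamma(-\delta)=\Gamma(1-\delta)/\delta$ shows $K_{\alpha,M}\uparrow 1$ as $M\to\infty$ with strictly positive increments, whence $K_{\alpha,M}<1$ for every finite $M$. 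Since $1+K_{\alpha,M}w_o>1$ and $K_{\alpha,M}<1$ give $(1+K_{\alpha,M}w_o)^2-K_{\alpha,M}>0$, the $\phi_o$-numerator is negative, so $\partial Q/\partial\phi_o|_{\beta_o}<0$ and $\beta_o$ decreases with $\phi_o$.

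The main obstacle is precisely the sign determination of the raw partial derivatives, which are genuinely indefinite until the stationarity constraint is exploited; the algebraic reduction to the numerators $1$ and $-[(1+K_{\alpha,M}w_o)^2-K_{\alpha,M}]$ is where the effort concentrates, together with the bound $K_{\alpha,M}<1$ needed to sign the latter.
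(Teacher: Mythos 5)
Your proof is correct and takes exactly the paper's route---the paper's entire proof is a one-line invocation of the derivative rule for implicit functions on $Q(\beta_o)=0$---and your algebra checks out: using $Q(\beta_o)=0$ to eliminate $\ln(1+\beta_o)$, the partial derivative of $Q$ in \eqref{opt_beta} with respect to $K_{\alpha,M}$ collapses to $1/\bigl((1+\beta_o)\bigl[1+K_{\alpha,M}(\phi_o\beta_o^{\delta}-1)\bigr]\bigr)>0$, the partial with respect to $\phi_o$ collapses to a positive multiple of $K_{\alpha,M}-(1+K_{\alpha,M}\phi_o\beta_o^{\delta})^2<0$, and the bound $K_{\alpha,M}<1$ needed for the latter does follow from $\sum_{m\ge1}\Gamma(m-\delta)/m!=\Gamma(1-\delta)/\delta$ as you argue. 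Two remarks: that bound is genuinely required yet nowhere established in the paper, so your write-up is more complete than the paper's own proof; and for $\partial Q/\partial\beta|_{\beta=\beta_o}<0$ you should cite the strict inequality $d^2\mathcal{T}_o(\beta)/d\beta^2|_{\beta=\beta_o}<0$ from Appendix~\ref{appendix_theorem_opt_beta} rather than the ``crosses zero from above'' picture, which on its own only gives $\partial Q/\partial\beta|_{\beta=\beta_o}\le 0$.
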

\begin{proof}
The proof is completed by proving $\frac{d\beta_o}{d\phi_o}<0$ and $\frac{d\beta_o}{dK_{\alpha,M}}>0$ using the derivative rule for implicit functions with $Q(\beta_o)=0$ \cite{Zheng2015Multi}.
\end{proof}

With the aid of Corollaries \ref{corollary_opt_pa} and \ref{corollary_opt_beta}, various insights into the behavior of the optimal rate $R_o = \ln(1+\beta_o)$ and the resultant maximal covert throughput $\mathcal{T}_o=\mathcal{C}(P_A^*,\beta_o)R_o$ are developed.
\begin{proposition}\label{proposition_opt_beta_r}
The optimal rate $R_o$ and the maximal covert throughput $\mathcal{T}_o$ increase with the number of transmit antennas $M$ and the covert outage probability threshold $\epsilon$, decrease with the distance between Alice and Bob $r_{a,o}$ and the number of wardens $N$, and are invariant to the density of interferers $\lambda_J$ and the interfering power $P_J$, irrespective of $M$.
\end{proposition}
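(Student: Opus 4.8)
The plan is to exploit the fact that, after the two-stage optimization, both the optimal rate $R_o=\ln(1+\beta_o)$ and the maximal throughput $\mathcal{T}_o=\max_{\beta>0}\mathcal{T}_o(\beta)$ depend on the system parameters only through the pair $(\phi_o,K_{\alpha,M})$, where $\beta_o$ is the unique root of $Q(\beta)=0$ from Theorem~\ref{theorem_opt_beta}. I would first establish the master claim that $R_o$ and $\mathcal{T}_o$ are each strictly decreasing in $\phi_o$ and strictly increasing in $K_{\alpha,M}$; every assertion of the proposition then follows by merely recording the sign with which each listed parameter moves $\phi_o$ and $K_{\alpha,M}$. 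For $R_o$ the master claim is immediate, since Corollary~\ref{corollary_opt_beta} gives $d\beta_o/d\phi_o<0$ and $d\beta_o/dK_{\alpha,M}>0$ and $\ln(1+\cdot)$ is increasing. For $\mathcal{T}_o$ I would argue through the envelope. Monotonicity in $K_{\alpha,M}$ is pointwise, because $\partial\mathcal{T}_o(\beta)/\partial K_{\alpha,M}=\phi_o\beta^{\delta}e^{-\phi_o\beta^{\delta}}\ln(1+\beta)>0$ for every $\beta>0$, so the supremum over $\beta$ inherits it. For the dependence on $\phi_o$ I would substitute $t\triangleq\phi_o\beta^{\delta}$ to rewrite $\mathcal{T}_o(\beta)=(1+K_{\alpha,M}t)e^{-t}\ln\bigl(1+(t/\phi_o)^{1/\delta}\bigr)$; for each fixed $t$ only the logarithm depends on $\phi_o$ and it strictly decreases as $\phi_o$ grows, so the envelope is strictly decreasing in $\phi_o$.

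With the master claim in hand, the invariance and the straightforward monotonicities drop out of Corollary~\ref{corollary_opt_pa}. Substituting $P_{max}\propto\lambda_J^{\alpha/2}P_J$ into $\phi_o=\kappa\lambda_J P_J^{\delta}r_{a,o}^2/P_{max}^{\delta}$ and using $\alpha\delta/2=1$, the factors $\lambda_J$ and $P_J^{\delta}$ cancel exactly, leaving $\phi_o$ free of both $\lambda_J$ and $P_J$; as $K_{\alpha,M}$ contains neither, $R_o$ and $\mathcal{T}_o$ are invariant to the interferer density and the interfering power, and since this cancellation is independent of $M$, the invariance holds irrespective of $M$. For the number of antennas, $P_{max}$ (hence $\phi_o$) is independent of $M$ by Corollary~\ref{corollary_opt_pa}, whereas $K_{\alpha,M}$ strictly increases with $M$ because the summand added in passing from $M$ to $M+1$, namely $\tfrac{\delta}{M!}\prod_{l=1}^{M-1}(l-\delta)$, is positive for $0<\delta<1$; the master claim then gives the increase of both $R_o$ and $\mathcal{T}_o$ with $M$. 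Finally, Corollary~\ref{corollary_opt_pa} states that $P_{max}$ increases with $\epsilon$ and decreases with $N$, so $\phi_o$ decreases with $\epsilon$ and increases with $N$, delivering the claimed increase with $\epsilon$ and decrease with $N$.

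The hard part will be the monotonicity in $r_{a,o}$, because there $\phi_o$ moves in two competing ways: the explicit numerator $r_{a,o}^2$ pushes it up, whereas $P_{max}=\mathcal{O}^{-1}(\epsilon)$ also grows with $r_{a,o}$ — a more distant Alice is harder for the disc-confined Willies to detect — and this pushes $\phi_o$ down through the denominator. To resolve the net sign I would isolate the dimensionless load $g\triangleq\lambda_J^{\alpha/2}P_J/P_{max}$, for which $\phi_o=\kappa r_{a,o}^2 g^{\delta}$ and (by the same scaling that underlies Corollary~\ref{corollary_opt_pa}) $\bar p_w$ depends on a warden's position only through $r_{a,w}^{\alpha}g$, and then differentiate the covertness constraint $\mathcal{O}=\epsilon$ implicitly. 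Writing the constraint as $G(r_{a,o},g)=(1-\epsilon)^{1/N}$ with $G$ the disc-average of $1-\bar p_w$ from \eqref{oco_expression}, both partial derivatives $\partial G/\partial g$ and $\partial G/\partial r_{a,o}$ are positive (the latter because $\mathbb{E}[r_{a,w}^{\alpha}]$ over the centered disc increases with $r_{a,o}$), so $g$ strictly decreases with $r_{a,o}$. The crux is then to bound the speed of this decrease: since the elasticity of $r_{a,w}^{\alpha}$ with respect to $r_{a,o}$ never exceeds $\alpha$, implicit differentiation yields $d\ln g/d\ln r_{a,o}>-\alpha$, which is precisely the condition $\tfrac{d}{dr_{a,o}}\bigl(r_{a,o}^2 g^{\delta}\bigr)>0$ needed to conclude that $\phi_o$ strictly increases with $r_{a,o}$ (flattening only in the far field $r_{a,o}\to\infty$). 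The master claim then yields the decrease of both $R_o$ and $\mathcal{T}_o$ with $r_{a,o}$, completing the proof. I expect this elasticity bound, rather than any of the surrounding algebra, to be the only genuinely nontrivial step.
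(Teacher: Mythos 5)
Up to the treatment of $r_{a,o}$, your proposal is correct and essentially retraces the paper's own proof: the reduction of everything to the pair $(\phi_o,K_{\alpha,M})$, the use of Corollary \ref{corollary_opt_beta} for $\beta_o$, and the use of Corollary \ref{corollary_opt_pa} to translate the effects of $M$, $\epsilon$, $N$, $\lambda_J$, $P_J$ into movements of $\phi_o$ and $K_{\alpha,M}$ are exactly the steps in the paper's appendix. Your two envelope arguments for $\mathcal{T}_o$ (the pointwise positivity of $\partial\mathcal{T}_o(\beta)/\partial K_{\alpha,M}$, and the substitution $t=\phi_o\beta^{\delta}$ to handle $\phi_o$) are in fact tighter than the paper's one-line assertion that the monotonicity of $\mathcal{T}_o$ coincides with that of $\mathcal{C}$. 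The problem is the step you yourself flag as the crux.

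The claim that the elasticity of $r_{a,w}^{\alpha}$ with respect to $r_{a,o}$ never exceeds $\alpha$ is false. With $r_{a,w}^2=r_{a,o}^2+r^2-2r_{a,o}r\cos(\theta_{a,o}-\theta)$, a direct computation gives $\frac{\partial\ln r_{a,w}^{\alpha}}{\partial\ln r_{a,o}}=\alpha\,r_{a,o}\bigl(r_{a,o}-r\cos(\theta_{a,o}-\theta)\bigr)/r_{a,w}^{2}$, and this is $\le\alpha$ if and only if $r\ge r_{a,o}\cos(\theta_{a,o}-\theta)$; the set where it fails is precisely the disc having the Bob--Alice segment as diameter, which always intersects $\mathcal{B}(o,D)$ in a region of positive measure. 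Concretely, for a warden on that segment ($\theta=\theta_{a,o}$, $r<r_{a,o}$) the elasticity is $\alpha r_{a,o}/(r_{a,o}-r)$: it equals $2\alpha$ at $r=r_{a,o}/2$ and diverges as $r\to r_{a,o}$. Since the implicit differentiation of $\mathcal{O}=\epsilon$ produces a \emph{weighted average} of these elasticities, and nothing in your proposal controls the weights on the offending region, the inequality $d\ln g/d\ln r_{a,o}>-\alpha$ — and hence the claimed increase of $\phi_o$ with $r_{a,o}$ — is unproven. (A secondary, repairable soft spot: your justification of $\partial G/\partial r_{a,o}>0$ via the mean of $r_{a,w}^{\alpha}$ should instead use first-order stochastic dominance of $r_{a,w}$, which follows from the monotone shrinkage of disc-intersection areas.) For what it is worth, the paper never confronts this coupling at all: its proof reads off only the explicit factor $r_{a,o}^2$ in $\phi_o$ and treats $P_{max}$ as unaffected by $r_{a,o}$, consistent with Corollary \ref{corollary_opt_pa} listing no $r_{a,o}$ dependence. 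So you have correctly identified a subtlety that the paper glosses over — by \eqref{oco_expression}, $P_{max}=\mathcal{O}^{-1}(\epsilon)$ genuinely depends on $r_{a,o}$ through $r_{a,w}$ — but the elasticity bound you propose to resolve it is not true, so this part of your argument has a genuine gap rather than a different-but-valid route.
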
 
\begin{proof}
		Please refer to Appendix \ref{appendix_proposition_opt_beta_r}.
\end{proof}

{Proposition \ref{proposition_opt_beta_r} captures an inherent contradiction between improving throughput and covertness for covert communications. Fortunately, multi-antenna techniques enable to achieve high throughput and covertness simultaneously. Moreover, the invariance property w.r.t. $\lambda_J$ and $P_J$ implies that, even facing ubiquitous interference, a superior balance between the requirements of throughput and covertness still can be struck by properly designing the transmit power.}

{
	\begin{remark}
		The above invariance property is consistent with that observed in \cite{He2017Covert} which discussed single-antenna covert communications against random interferers and a deterministic warden. This paper extends the work of \cite{He2017Covert} to the multi-antenna scenario with random wardens. Various new findings are obtained compared with \cite{He2017Covert}. In particular, the covert throughput is severely degraded when more wardens are deployed whereas can be markedly ameliorated by equipping more transmit antennas. Numerical results show that the proposed design scheme attains a significant throughput gain than that of \cite{He2017Covert}, by relaxing the constraint of reliability. 
\end{remark}
}

{
\begin{remark}
	The invariance property is valid for the interference-limited system and meanwhile the maximal transmit power $P_{max}$ can be adjusted proportionally to $\lambda_J^{\alpha/2} P_J$. If the transmit power $P_A$ is not allowed to exceed a power budget, denoted as $P_{bud}$, then for a large $\lambda_J$ or $P_J$ which yields $P_{max}>P_{bud}$, the above invariance property is compromised and $\mathcal{T}_o$ becomes decreasing with $\lambda_J$ and  $P_J$. Nevertheless, the power budget constraint is beyond the scope of this paper.
\end{remark}
}

\section{Distributed Antenna System}\label{dma}

This section optimizes the covert throughput for the DAS, where Alice deploys the $m$-th antenna at $\mathbb{L}_{a_m}$, i.e., $(r_{a_m,o},\theta_{a_m,o})$, with a transmit power $P_{D_m}$.
Since these geographically spread antennas are connected to a central processor, Alice is capable to enable them to deliver the same message simultaneously to Bob through DBF to enhance transmission reliability.

\subsection{Worst-case Covert Outage Probability}
The covert outage probability $\mathcal{O}$ for the DAS shares the same form as \eqref{oco}, where the average detection probability $\bar p_{w}$ is calculated as \eqref{average_pe}. 
The only difference lies in the power of signals received from Alice, which changes to $S_{w} = \left|\sum_{m=1}^M\sqrt{P_{D_m}}\frac{h_{a_m,o}^{\dagger}}{|h_{a_m,o}|} h_{a_m,w}r_{a_m,w}^{-\alpha/2}\right|^2$ here and is exponentially distributed with mean $\sum_{m=1}^M{P_{D_m}}r_{a_m,w}^{-\alpha}$. Finally, $\mathcal{O}$ is obtained as \eqref{oco_expression} with $\bar p_{w}$ given in \eqref{app_pe}, only requiring to revise the term $P_Ar_{a,w}^{-\alpha}$ in \eqref{app_pe} to $\sum_{m=1}^MP_{D_m}r_{a_m,w}^{-\alpha}$.

The optimal detection thresholds $\xi$ for Willies' detectors for the worst-case covert communication is obtained in a similar way for the CAS. Specifically, for the special case with $\alpha=4$, the optimal $\xi$ for Willie at $\mathbb{L}_w$ that leads to a maximal $\bar p_w$ is determined as given in Theorem \ref{theorem_opt_xi}, simply with $B$ in \eqref{opt_xi} replaced with ${1}/\left({\sum_{m=1}^MP_{D_m}r_{a_m,w}^{-4}}\right)$.

It is worth noting that the independence between $\mathcal{O}$ and $M$ for the CAS no longer holds for the DAS, and adding transmit antennas surprisingly exacerbates the covertness as will be certified in Sec. V. The reason behind is, the distributed antennas are more vulnerable to the detection by Willies who are randomly located in the network. Nevertheless, it is inferred that such deterioration will gradually vanish, and $\mathcal{O}$ will eventually approach a constant as $M$ becomes sufficiently large. 
For instance, 
it is easily verified that, the received power from an ocean of antennas arranged on a circle with transmit power $\frac{P_A}{M}$ is equal to that from a single antenna randomly distributed on the same circle with power $P_A$, i.e., $\lim_{M\rightarrow\infty}\frac{P_A}{M}\sum_{m=1}^M r_{a_m,w}^{-\alpha} =P_A \int_0^{2\pi}\frac{r^{-\alpha}_{a,w}}{2\pi}d\theta$.

\subsection{Connectivity Probability}
Revisiting the connectivity probability $\mathcal{C}$ defined in \eqref{def_oc}, $S_o$ for the DAS is given by $S_{o} = \big|\sum_{m=1}^M\sqrt{P_{D_m}}|h_{a_m,o}|r_{a_m,o}^{-\alpha/2}\big|^2$ which appears as the squared sum of independent and non-identically distributed Rayleigh random variables, rather than a gamma random variable for the CAS. This transition enormously complicates the computation of $\mathcal{C}$. Fortunately, an integral form for the exact $\mathcal{C}$ for the DAS is provided by the following theorem.
\begin{theorem}\label{theorem_cp_dma}
	The connectivity probability for the DAS is given by
 \begin{align}\label{cp_dma}
		\mathcal{C} = 1 - 
	\int_{\mathcal{V}}\frac{\psi_2e^{-\kappa\psi_0\beta^{\delta}}}{\psi_1^M}\sum_{n=1}^M\left(\delta\kappa\psi_0\beta^{\delta}\right)^n	\Upsilon_{M,n}dv_1,\cdots, dv_M,
	\end{align} 
where the domain of integration ${\mathcal{V}}$ is described as  $\{\mathcal{V}: v_1\ge 0, \cdots, v_M\ge 0, \sum_{m=1}^M v_m<1\}$, $\psi_0 = \lambda_J P_J^{\delta} \psi_1^{\delta}$, $\psi_1 = \sum_{m=1}^{M}\frac{r_{a_m,o}^{\alpha}v_m^2}{P_{D_m}}$, $\psi_2 = \prod_{m=1}^{M}\frac{2r_{a_m,o}^{\alpha}v_m}{P_{D_m}}$, and $\Upsilon_{M,n}$ is defined in Theorem \ref{theorem_cp}.
\end{theorem}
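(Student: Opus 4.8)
The plan is to reduce the connectivity event to a statement about a weighted sum of Rayleigh amplitudes and then reuse the Laplace-transform machinery developed for the CAS in Theorem~\ref{theorem_cp}. First I would condition on the aggregate interference $I_o$. Writing $a_m\triangleq\sqrt{P_{D_m}}r_{a_m,o}^{-\alpha/2}$ and $\rho_m\triangleq|h_{a_m,o}|$, the coherent combining enforced by DBF gives $\sqrt{S_o}=\sum_{m=1}^M a_m\rho_m$, where the $\rho_m$ are i.i.d. Rayleigh variables with $\mathbb{E}[\rho_m^2]=1$. Hence $S_o\ge\beta I_o$ is equivalent to $\sum_{m=1}^M a_m\rho_m\ge\sqrt{\beta I_o}$, and dividing through by $\sqrt{\beta I_o}$ turns the random threshold into the fixed threshold $1$: the event becomes $\sum_{m=1}^M W_m\ge 1$ with $W_m\triangleq a_m\rho_m/\sqrt{\beta I_o}$. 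Conditioned on $I_o$, each $W_m$ is again Rayleigh with density $g_m(v)=2b_m\beta I_o\,v\,e^{-b_m\beta I_o v^2}$, where $b_m\triangleq r_{a_m,o}^{\alpha}/P_{D_m}$; verifying this conditional law (and that the normalization preserves the constraint) is where I would be most careful.

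The key observation is that the reliability event is now the \emph{complement} of a simplex event for independent Rayleigh variables: $\mathbb{P}\{\sum_m W_m\ge1\mid I_o\}=1-\mathbb{P}\{\sum_m W_m<1\mid I_o\}$, and the latter is exactly the integral of $\prod_m g_m(v_m)$ over the simplex $\mathcal{V}=\{v_m\ge0,\ \sum_m v_m<1\}$. Averaging over $I_o$ and exchanging the order of integration (legitimate by Fubini, as the integrand is nonnegative) I would obtain
\begin{equation*}
\mathcal{C}=1-\int_{\mathcal{V}}\Big(\prod_{m=1}^M 2b_m v_m\Big)\beta^M\,\mathbb{E}_{I_o}\!\big[I_o^M e^{-\beta I_o\sum_m b_m v_m^2}\big]\,dv_1\cdots dv_M .
\end{equation*}
The inner expectation is a derivative of the interference Laplace transform, $\mathbb{E}_{I_o}[I_o^M e^{-sI_o}]=(-1)^M\,d^M\mathcal{L}_{I_o}(s)/ds^M$, evaluated at $s=\beta\sum_m b_m v_m^2=\beta\psi_1$ with $\psi_1=\sum_m r_{a_m,o}^{\alpha}v_m^2/P_{D_m}$, which is precisely the object handled in step~$\mathrm{(d)}$ of the proof of Theorem~\ref{theorem_cp}.

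From here the derivation runs parallel to the CAS. I would insert $\mathcal{L}_{I_o}(s)=e^{-\kappa\lambda_J P_J^{\delta}s^{\delta}}$ from Lemma~\ref{lemma_laplace_i} and apply the derivative identity of \cite[Theorem~1]{Hunter08Transmission} with $m=M$, which converts $\frac{(-s)^M}{M!}\mathcal{L}_{I_o}^{(M)}(s)$ into $\frac{e^{-\kappa\psi_0\beta^{\delta}}}{M!}\sum_{n=1}^M(\delta\kappa\psi_0\beta^{\delta})^n\Upsilon_{M,n}$, where $\psi_0=\lambda_J P_J^{\delta}\psi_1^{\delta}$ and $\Upsilon_{M,n}$ is as in Theorem~\ref{theorem_cp}. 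Collecting the prefactor $\prod_m 2b_m v_m=\psi_2$ together with $\beta^M/(\beta\psi_1)^M=\psi_1^{-M}$ coming out of the moment identity reproduces the integrand in \eqref{cp_dma}, completing the proof.

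The conceptual obstacle—and the reason the DAS result is an $M$-fold integral rather than the closed form of Theorem~\ref{theorem_cp}—is that $S_o$ is the square of a weighted \emph{sum of amplitudes} rather than a sum of powers, so $\|\bm h_{a,o}\|^2$ is no longer gamma-distributed and its complementary CDF admits no finite Poisson-type expansion. The device that circumvents this is the amplitude normalization that exposes the simplex structure while letting the entire $I_o$-dependence factor through the single power $I_o^M$; once that is in place, the remaining algebra is identical to the CAS computation and invokes only Lemma~\ref{lemma_laplace_i} and \cite[Theorem~1]{Hunter08Transmission}.
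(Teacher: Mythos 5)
Your proposal is correct and follows essentially the same route as the paper's proof: expressing $S_o$ as the square of a sum of independent Rayleigh amplitudes, rescaling by $\sqrt{\beta I_o}$ to reduce the reliability event to a fixed simplex integral (your conditional densities $g_m$ are exactly the paper's change of variables $x_m \rightarrow v_m\sqrt{\beta I_o}$), exchanging the integral with $\mathbb{E}_{I_o}$, and evaluating $\mathbb{E}_{I_o}\left[I_o^M e^{-\beta\psi_1 I_o}\right]$ via the $M$-th derivative of the Laplace transform from Lemma \ref{lemma_laplace_i} together with \cite[Theorem 1]{Hunter08Transmission}. All intermediate identities check out, including the prefactor bookkeeping $\psi_2\beta^M/(\beta\psi_1)^M = \psi_2\psi_1^{-M}$.
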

\begin{proof}
		Please refer to Appendix \ref{appendix_theorem_cp_dma}.
\end{proof}

The exactness of \eqref{cp_dma} is verified in the simulation section.
Although the monotonicity of $\mathcal{C}$ w.r.t. $P_{D_m}$ and $\beta$ is not explicitly reflected in \eqref{cp_dma} due to the multiple integral, it still can be concluded that $\mathcal{C}$ increases with $P_{D_m}$ and decreases with $\beta$ from the definition given in \eqref{def_oc}. As done for the CAS, the large connectivity probability regime where $\mathcal{C}\rightarrow 1$ is considered, followed by a more concise expression for $\mathcal{C}$ given in the following corollary. 
\begin{corollary}\label{corollary_cp_dma_high}
	In the large connectivity probability regime,  $\mathcal{C}$ in \eqref{cp_dma} is approximated by
	\begin{equation}\label{cp_dma_high}
	{\mathcal{ C} \approx }1 -W \beta^{\delta},
	\end{equation}
	where  $W \triangleq \delta\kappa\lambda_J P_J^{\delta}\Upsilon_{M,1}\int_{\mathcal{V}}\psi_1^{\delta-M}\psi_2dv_1,\cdots,dv_M$ is independent of $\beta$.
\end{corollary}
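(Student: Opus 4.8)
The plan is to derive \eqref{cp_dma_high} as the leading-order term of the exact expression \eqref{cp_dma} in the regime $\mathcal{C}\to 1$, exactly mirroring the argument used for the CAS in Lemma \ref{lemma_cp_high}. First I would identify the relevant small parameter: in the integrand of \eqref{cp_dma} the entire $\beta$-dependence enters through the quantity $\kappa\psi_0\beta^{\delta}=\kappa\lambda_J P_J^{\delta}\psi_1^{\delta}\beta^{\delta}$, and the regime $\mathcal{C}\to 1$ corresponds precisely to this quantity being small over the effective domain of integration, i.e. $\beta^{\delta}\to 0$. I would then Taylor-expand the two $\beta$-dependent factors separately: the exponential as $e^{-\kappa\psi_0\beta^{\delta}}=1+\Theta(\beta^{\delta})$, and the finite sum $\sum_{n=1}^{M}(\delta\kappa\psi_0\beta^{\delta})^n\Upsilon_{M,n}$, whose lowest-order contribution is the $n=1$ term $\delta\kappa\psi_0\beta^{\delta}\Upsilon_{M,1}$, with all $n\ge 2$ terms of order $\Theta(\beta^{2\delta})$.

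Multiplying the two expansions and retaining only the $\Theta(\beta^{\delta})$ term, the exponential collapses to unity and the integrand reduces to $\frac{\psi_2}{\psi_1^{M}}\,\delta\kappa\psi_0\beta^{\delta}\Upsilon_{M,1}$. Substituting $\psi_0=\lambda_J P_J^{\delta}\psi_1^{\delta}$ and combining the powers via $\psi_1^{-M}\psi_1^{\delta}=\psi_1^{\delta-M}$, the constants $\delta\kappa\lambda_J P_J^{\delta}\Upsilon_{M,1}$ together with the factor $\beta^{\delta}$ pull outside the integral. What remains is $1-W\beta^{\delta}$ with $W=\delta\kappa\lambda_J P_J^{\delta}\Upsilon_{M,1}\int_{\mathcal{V}}\psi_1^{\delta-M}\psi_2\,dv_1,\cdots,dv_M$, which is manifestly free of $\beta$, establishing \eqref{cp_dma_high}.

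The hard part will not be the algebra but the analytic justification for discarding the higher-order contribution under the multiple integral over $\mathcal{V}$. Because $\delta=2/\alpha\le 1<M$ for $M\ge 2$, the exponent $\delta-M$ is negative, so $\psi_1^{\delta-M}$ is singular where the $v_m$ simultaneously approach zero, i.e. at the corner of $\mathcal{V}$ at the origin; I would need to confirm that the factor $\psi_2$, which carries one power of each $v_m$, suffices to keep $W$ finite. A scaling check near the origin ($v_m\sim\rho u_m$) gives integrand times volume element of order $\rho^{2\delta-1}\,d\rho$, integrable since $2\delta>0$, so $W$ indeed converges. The essential point is then to verify that the $\Theta(\beta^{2\delta})$ remainder, after integration over $\mathcal{V}$, is genuinely $o(\beta^{\delta})$, so that the term-by-term expansion commutes with the integration. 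Once an integrable $\beta$-independent envelope for the remainder is exhibited, dominated convergence legitimizes passing the small-$\beta$ limit inside the integral, and the corollary follows.
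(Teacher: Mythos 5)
Your proposal is correct and follows essentially the same route as the paper, whose entire proof is the one-line observation that \eqref{cp_dma_high} follows ``by considering $\psi_0\rightarrow 0$ in \eqref{cp_dma}'' --- i.e., precisely your expansion in which the exponential collapses to unity and only the $n=1$ term $\delta\kappa\psi_0\beta^{\delta}\Upsilon_{M,1}$ survives at leading order, with $\psi_0=\lambda_J P_J^{\delta}\psi_1^{\delta}$ giving $\psi_1^{\delta-M}$ and the constant $W$. Your additional checks --- the scaling argument showing $\psi_1^{\delta-M}\psi_2$ integrates to a finite $W$ near the corner of $\mathcal{V}$, and the dominated-convergence justification for discarding the $\Theta\bigl(\beta^{2\delta}\bigr)$ remainder under the integral --- are rigor the paper omits, not a different approach.
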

\begin{proof}
	The result follows easily by considering $\psi_0\rightarrow 0$ in \eqref{cp_dma}.  
\end{proof}

{Note that $\mathcal{C}$ in \eqref{cp_dma_high} becomes a linear decreasing function of $\lambda_J$, $P_J^{\delta}$, and $\beta^{\delta}$, respectively. This greatly simplifies the design of transmit power and  transmission rate for covert throughput maximization, as will be detailed in the next subsection. }

\subsection{Covert Throughput Maximization} 
This subsection addresses the problem of maximizing the covert throughput $\mathcal{T}=\mathcal{C}R$ subject to a covertness constraint $\mathcal{O}\le\epsilon$ for the DAS.
The problem is formulated similarly as \eqref{ct_max} and can be resolved by executing the same two-step process for the CAS described in Sec. III-D. 

{Generally, the optimal transmit power $P_{D_m}$, for $m = 1,\cdots, M$, should be jointly determined based on the distances $r_{a_m,o}$. However, this is intractable since $\mathcal{O}$ and $\mathcal{C}$ are coupled with $P_{D_m}$ in an extremely  sophisticated way. 
Motivated by the fact that Alice is absolutely unaware of the locations of both the interferrers and Willies, a plausible special case is examined, where the $M$ antennas are deployed at a same distance from Bob and with equal transmit power, i.e., $P_{D_m}=P_D$, for $m = 1,\cdots, M$.} In this case, the optimal $P_D$ that maximizes $\mathcal{C}$ for a fixed $R$ is equal to the maximal $P_D$ satisfying $\mathcal{O}\le\epsilon$, i.e., $P_D^* = \mathcal{O}^{-1}(\epsilon)$.
Subsequently, the optimal $R^*$ that maximizes $\mathcal{T}=\mathcal{C}(P_D^*)R$ can be exhaustively searched as $R^* = {\arg}\max_{R>0}\mathcal{C}(P_D^*)R$.

The following part seeks an easy-to-compute suboptimal $R_o$ by considering the large connectivity probability regime with $\mathcal{C}$ given in \eqref{cp_dma_high}. The problem is described as 
\begin{equation}\label{ct_dma}
\max_{\beta>0}\mathcal{T}_o(\beta)\triangleq\left(1-W\beta^\delta \right) \ln (1+\beta),
\end{equation}
with the solution provided by the following theorem.
\begin{theorem}\label{theorem_opt_beta_dma}
 $\mathcal{T}_o(\beta)$ in \eqref{ct_dma} first increases and then decreases with $\beta$, and reaches the maximum at $\beta=\beta_o$, where $\beta_o$ is the unique zero-crossing $\beta>0$ of the following derivative:
	\begin{equation}\label{opt_beta_dma}
\frac{d\mathcal{T}_o(\beta)}{d\beta} = \frac{1-W\beta^{\delta}}{1+\beta}-W\delta\beta^{\delta-1}\ln(1+\beta).
	\end{equation}  
\end{theorem}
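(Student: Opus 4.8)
The plan is to prove the claim in three moves: (i) verify the derivative expression \eqref{opt_beta_dma} by a direct differentiation, (ii) recast this derivative as a single fraction with a manifestly positive denominator so that its sign is governed by one scalar function, and (iii) show that this function is strictly monotone, which simultaneously delivers the unique zero-crossing and the first-increasing-then-decreasing shape. Throughout I use $\delta = 2/\alpha \in (0,1)$ (finiteness of $\Gamma(1-\delta)$ in \eqref{laplace_i} forces $\alpha > 2$) together with $W > 0$.

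First I differentiate $\mathcal{T}_o(\beta) = (1 - W\beta^\delta)\ln(1+\beta)$ by the product rule, which reproduces exactly \eqref{opt_beta_dma}. The crucial step is to collect the two terms of $\mathcal{T}_o'(\beta)$ over the common denominator $1+\beta$. Writing $\Omega(\beta) \triangleq \beta^\delta + \delta(1+\beta)\beta^{\delta-1}\ln(1+\beta)$, a short rearrangement yields the compact form
\begin{equation*}
\mathcal{T}_o'(\beta) = \frac{1 - W\Omega(\beta)}{1+\beta}.
\end{equation*}
Since $1 + \beta > 0$ for $\beta > 0$, the sign of $\mathcal{T}_o'(\beta)$ coincides with the sign of $1 - W\Omega(\beta)$, so the whole problem reduces to the behavior of the single function $\Omega$.

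Next I establish that $\Omega$ is strictly increasing on $(0,\infty)$ with boundary values $\Omega(0^+) = 0$ and $\Omega(\infty) = \infty$ (the limits follow since $\beta^\delta \to 0,\infty$ respectively while $\delta(1+\beta)\beta^{\delta-1}\ln(1+\beta) \sim \delta\beta^\delta \to 0$ as $\beta \to 0^+$). Factoring $\Omega(\beta) = \beta^{\delta-1}[\beta + \delta(1+\beta)\ln(1+\beta)]$ and differentiating, the sign of $\Omega'(\beta)$ reduces, after pulling out the positive factor $\delta\beta^{\delta-2}$, to the sign of
\begin{equation*}
\Lambda(\beta) \triangleq 2\beta + (\delta\beta + \delta - 1)\ln(1+\beta).
\end{equation*}
Here lies the technical heart. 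I note $\Lambda(0) = 0$ and, using the identity $\frac{\delta\beta+\delta-1}{1+\beta} = \delta - \frac{1}{1+\beta}$, I obtain $\Lambda'(\beta) = 2 + \delta - \frac{1}{1+\beta} + \delta\ln(1+\beta)$. Because $\frac{1}{1+\beta} \le 1$ and $\ln(1+\beta) \ge 0$ for $\beta \ge 0$, this gives $\Lambda'(\beta) \ge 1 + \delta > 0$, so $\Lambda$ increases strictly from $\Lambda(0) = 0$ and hence $\Lambda(\beta) > 0$ on $(0,\infty)$. Consequently $\Omega'(\beta) > 0$ throughout.

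Finally I assemble the conclusion. Since $\Omega$ increases strictly from $0$ to $\infty$, the quantity $1 - W\Omega(\beta)$ decreases strictly from $1$ to $-\infty$ and therefore has a unique zero $\beta_o$, characterized by $\Omega(\beta_o) = 1/W$, equivalently by $\mathcal{T}_o'(\beta_o) = 0$. Moreover $\mathcal{T}_o'(\beta) > 0$ for $\beta < \beta_o$ and $\mathcal{T}_o'(\beta) < 0$ for $\beta > \beta_o$, which is precisely the first-increasing-then-decreasing shape with global maximizer $\beta_o$. I expect the main obstacle to be step (iii): without the regrouping into $1 - W\Omega(\beta)$, the second derivative of $\mathcal{T}_o$ carries competing positive and negative terms and concavity is not evident, so the payoff of the argument is spotting the factorization that isolates the strictly monotone $\Omega$ and then handling $\Lambda$ via the simplification $\frac{\delta\beta+\delta-1}{1+\beta} = \delta - \frac{1}{1+\beta}$.
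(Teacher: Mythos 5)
Your proof is correct, but it takes a genuinely different route from the paper's. The paper argues via quasi-concavity: it first shows $\frac{d\mathcal{T}_o(\beta)}{d\beta}$ is positive at $\beta=0$ and negative as $\beta\rightarrow W^{-1/\delta}$ (restricting attention to the interval where $\mathcal{T}_o>0$), and then evaluates the \emph{second} derivative at an arbitrary critical point $\beta_o$, showing via the inequality $\beta_o\ge\ln(1+\beta_o)$ that $\frac{d^2\mathcal{T}_o(\beta)}{d\beta^2}\big|_{\beta=\beta_o}<\frac{-\delta}{\beta_o(1+\beta_o)}<0$; since every zero-crossing of the first derivative is a strict local maximum, the zero-crossing must be unique. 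You instead never touch the second derivative of $\mathcal{T}_o$: you collect the derivative into the single fraction $\frac{1-W\Omega(\beta)}{1+\beta}$ with $\Omega(\beta)=\beta^{\delta}+\delta(1+\beta)\beta^{\delta-1}\ln(1+\beta)$, and prove $\Omega$ is strictly increasing from $0$ to $\infty$ by reducing $\Omega'$ to the auxiliary function $\Lambda(\beta)=2\beta+(\delta\beta+\delta-1)\ln(1+\beta)$ with $\Lambda'(\beta)\ge 1+\delta>0$ (your algebra here checks out: $\Omega'(\beta)=\delta\beta^{\delta-2}\Lambda(\beta)$). Your route buys several things the paper's does not: uniqueness falls out directly from strict monotonicity of the sign-determining factor rather than from the indirect every-critical-point-is-a-local-max argument; you obtain the clean implicit characterization $\Omega(\beta_o)=1/W$ of the maximizer; and your sign analysis holds globally on $(0,\infty)$ without needing to first truncate to $\beta<W^{-1/\delta}$. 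What the paper's approach buys in exchange is brevity of computation and structural parallelism with its proof of Theorem \ref{theorem_opt_beta} for the CAS (Appendix \ref{appendix_theorem_opt_beta}), where the same second-derivative-at-critical-points template is used; its explicit formula for $\frac{d^2\mathcal{T}_o(\beta)}{d\beta^2}\big|_{\beta=\beta_o}$ also immediately exhibits strict concavity at the optimum, which is convenient for the subsequent implicit-function manipulations (e.g., the claim $\frac{d\beta_o}{dW}<0$ in Section \ref{dma}).
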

\begin{proof}
	Please refer to Appendix \ref{appendix_theorem_opt_beta_dma}.
\end{proof}

As shown in Appendix \ref{appendix_theorem_opt_beta_dma}, $\frac{d\mathcal{T}(\beta)}{d\beta} $ is initially positive and then negative as $\beta$ increases from zero to $W^{-1/\delta}$, then $\beta_o$ can be rapidly searched via a bisection method with $\frac{d\mathcal{T}(\beta)}{d\beta} =0$. Moreover, invoking the derivative rule for implicit functions with  $\frac{d\mathcal{T}(\beta_o)}{d\beta_o} =0$, $\frac{d\beta_o}{dW}$ is proved to be negative. Similar to the CAS, it is proved that the maximal transmit power $P_D^*$ for the DAS is proportional to the term $\lambda_J^{\alpha/2} P_J$, i.e., $P_D^*\propto\lambda_J^{\alpha/2} P_J$. Hence, the optimal transmission rate $R^*$ and the maximal covert throughput $\mathcal{T}^*$ are invariant to $\lambda_J$ and $P_J$, regardless of the value of $M$.

\section{Simulation Results}
This section presents simulation results to verify the theoretical findings. Without loss of generality, for the CAS Alice's $M$ antennas are placed at the same location $\mathbb{L}_a=(1,0)$ with a unit distance from Bob at the origin $o$. For a fair comparison, the $M$ antennas for the DAS are arranged uniformly on the circle $\mathcal{B}(o,1)$ such that  the $m$-th antenna is located at $\mathbb{L}_{a_m}=(1,2\pi(m-1)/M)$.
Moreover, equal total transmit power is considered for the two systems and equal power allocation is assumed among the antennas for the DAS such that $P_{D_m} =P_D =  P_A/{M}$, for $m=1,\cdots,M$. Throughout the experiments, some parameters are set fixed such that $\alpha=4$ and $P_J = 30$ dBm.

\begin{figure}[!t]
	\centering
	\includegraphics[width = 3.8in]{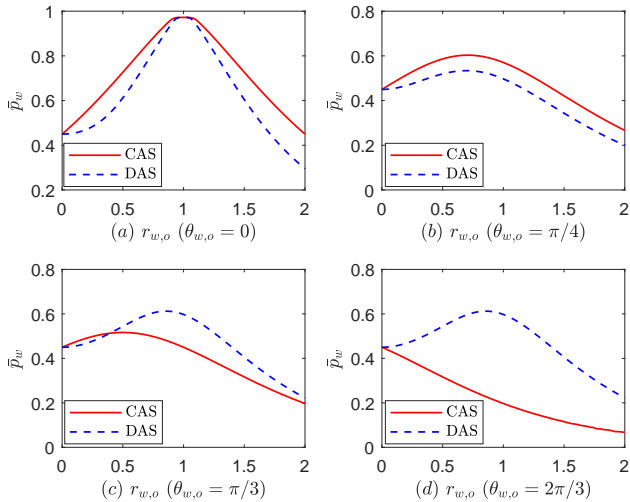}
	\caption{Average detection probability $\bar p_{w}$ vs.  Willie's location $\mathbb{L}w=(r_{w,o},\theta_{w,o})$, with {$P_A=30$ dBm}, $\lambda_J = 0.1$, and $M=4$. }
	\label{ADP_W}
\end{figure}

Fig. \ref{ADP_W} plots the average detection probability $\bar p_{w}$ of an arbitrary Willie with different locations  $\mathbb{L}_w$.
Whether the CAS or DAS is superior in terms of covertness depends heavily on Willie's location. 
Bear the locations of Alice's transmit antennas in mind, it is observed that when Willie stays closer to the co-located antennas than to the distributed ones (see the two figures above), the DAS provides a higher level of covertness (i.e., a smaller $\bar p_{w}$). Conversely, as Willie moves far away from the co-located antennas but approaches one or more distributed antennas (see the two figures below), the CAS produces a smaller $\bar p_{w}$ and is more beneficial for covert communications.

\begin{figure}[!t]
	\centering
	\includegraphics[width = 3.8in]{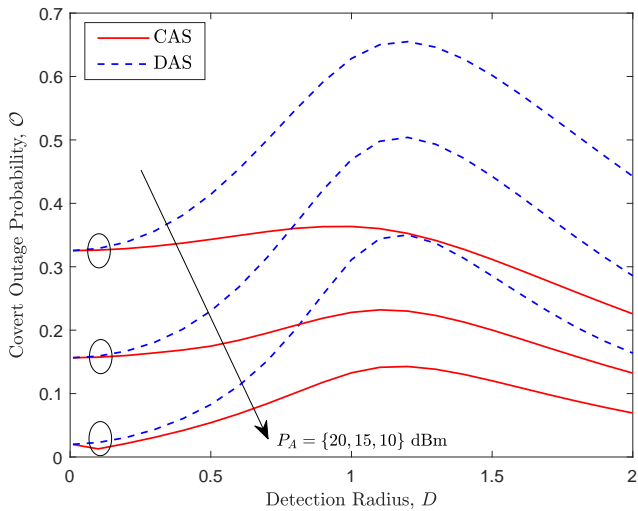}
	\caption{Covert outage probability $\mathcal{O}$ vs. detection radius $D$ for different $P_A$'s, with $\lambda_J = 0.1$, $M=4$, and $N=2$. 
}
	\label{COP_D}
\end{figure}
Fig. \ref{COP_D} depicts the covert outage probability $\mathcal{O}$ with different detection radius $D$ and transmit power $P_A$.
Although the DAS achieves a lower average detection probability than that of the CAS when Willie resides inside certain regions (as shown in Fig. \ref{ADP_W}), the CAS can invariably offer a smaller $\mathcal{O}$ compared with the DAS.
The underlying reason is that the distributed deployment of transmit antennas is a double-edged sword for covert communications. On one hand, it lowers the transmit power for each antenna, which indeed hampers Willie's detection. On the other hand, since Willie can appear anywhere in the network, the geographically scattered antennas also increase the possibility of offering Willie a larger aggregate power, which unfortunately outweighs the advantage of the reduced transmit power. Therefore, when Willies' movement is completely uninformed, the CAS is more rewarding for covert communications.
It is also observed that there exists an optimal $D^*$ leading to a maximal $\mathcal{O}$, and the value of $D^*$ is slightly larger than the distance between Alice and Bob ($r_{a,o}=1$ in this figure).
Besides, as indicated by Corollary \ref{corollary_pw_pa}, $\mathcal{O}$ increases with $P_A$ as a higher transmit power makes the detection easier.

\begin{figure}[!t]
	\centering
	\includegraphics[width = 3.6in]{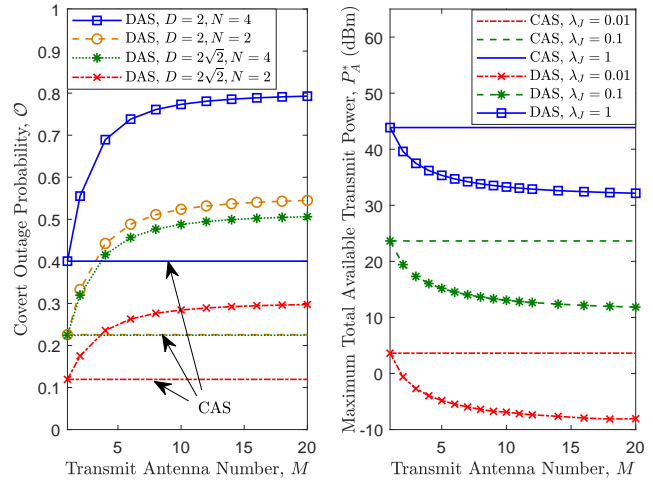}
	\caption{Left figure: $\mathcal{O}$ vs. $M$, with $\lambda_J = 0.1$ and {$P_A = 20$ dBm}; Right figure: $P_A^*$ vs. $M$, with $D = 2$, $N=2$, and $\epsilon = 0.3$. }
	\label{COP_M}
\end{figure}
Fig. \ref{COP_M} shows how multiple antennas affect the covert outage probability $\mathcal{O}$ (left figure) and Alice's maximal transmit power $P_A^*$ satisfying $\mathcal{O}\le\epsilon$ (right figure), respectively. In the left figure, $\mathcal{O}$ remains constant with $M$ for the CAS since the power perceived at Willie is equivalent to that from a single-antenna transmitter when Alice adopts MRT. For the DAS, $\mathcal{O}$ increases with $M$ and reaches a plateau for a sufficiently large $M$, as explained in Sec. IV-A. 
It is as expected that $\mathcal{O}$ decreases with the detection radius $D>1$ and increases with the warden number $N$. The two curves marked with circles and asterisks share the same density of wardens, i.e., $\frac{N}{\pi D^2}=\frac{1}{2\pi}$, but the latter gives a larger $\mathcal{O}$. This implies, narrowing down the detection range while with  fewer wardens might be more effective for detection then deploying more wardens in an expanded detection region.
The right figure reveals that $P_A^*$ achieved for the CAS is independent of $M$, whereas that for the DAS decreases with $M$.
This substantiates the disadvantage of the distributed antennas. Moreover, $P_A^*$ for both systems increases with the interferer density $\lambda_J$, which demonstrates the benefit of coexisting interferers for covert communications. 

\begin{figure}[!t]
	\centering
	\includegraphics[width = 3.8in]{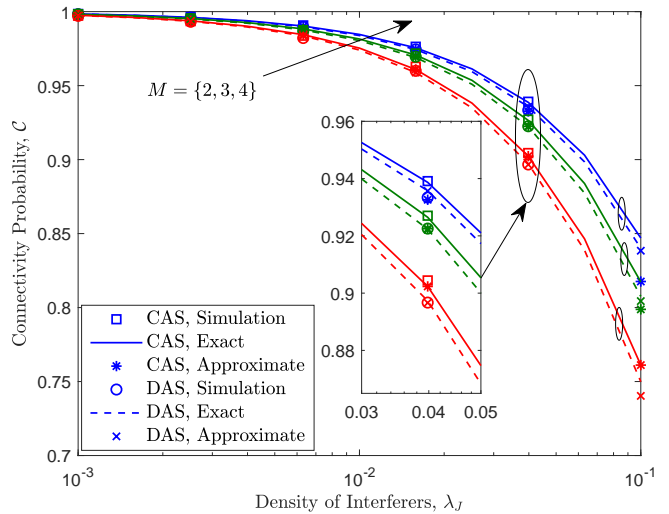}
	\caption{$\mathcal{C}$ vs. $\lambda_J$ for different $M$'s, with {$P_A = 30$ dBm}. }
	\label{CP}
\end{figure}
Fig. \ref{CP} plots the connectivity probability $\mathcal{C}$ with different densities of interferers $\lambda_J$ and numbers of transmit antennas $M$. Monte-Carlo simulation results match well with the theoretical values. The approximations derived in \eqref{cp_high} and \eqref{cp_dma_high} approach closely to the exact values in \eqref{oc} and \eqref{cp_dma} for quite a wide range of $\mathcal{C}$, respectively. {This affirms the rationality of considering the large $\mathcal{C}$ regime for designing the optimal rate. It is found that $\mathcal{C}$ increases with $M$ for both systems, and the gap between them is nearly negligible. This is because, both systems take full advantage of  the spatial degrees of freedom and enable coherent superposition of signals at the destination.}

\begin{figure}[!t]
	\centering
	\includegraphics[width = 3.8in]{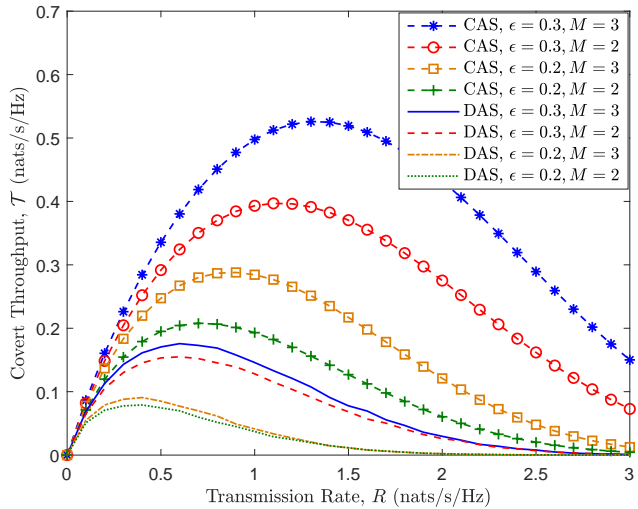}
	\caption{$\mathcal{T}$ vs. $R_s$ for different $\epsilon$'s and $M$'s, with $\lambda_J = 0.1$, $D=2$, and $N=2$. }
	\label{CT_RS}
\end{figure}
Fig. \ref{CT_RS} shows that the covert throughput $\mathcal{T}$ for both the CAS and DAS first increases and then decreases with the transmission rate $R$, as proved in Theorems \ref{theorem_opt_beta} and \ref{theorem_opt_beta_dma}. {This is because, as $R$ continues to increase, the connectivity probability $\mathcal{C}$ becomes quite small, thus leading to a low $\mathcal{T}=\mathcal{C}R$. It is found that the optimal $R$ yields a peak throughput much higher than that under a constant $R$ without optimization. This highlights the significance of the designs. }
As stated in Proposition \ref{proposition_opt_beta_r}, the optimal $R$ increases with the covert outage probability threshold $\epsilon$ and the number of transmit antennas $M$. 
{The basic cause is that, as  $\epsilon$ (also the maximal $P_A^*$) or $M$ increases, $\mathcal{C}$ rises and as a consequence a larger $R$ can be supported.}
When adding a single antenna, the CAS achieves a pronounced throughput improvement whereas the DAS only attains an insignificant gain. 
The main reason behind lies in the loss of the maximal transmit power for the DAS after guaranteeing the covertness requirement, as shown in Fig. \ref{COP_M}.

\begin{figure}[!t]
	\centering
	\includegraphics[width = 3.8in]{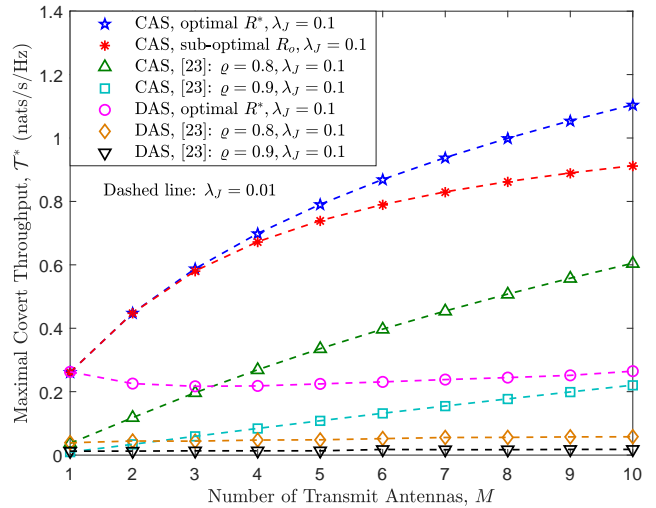}
	\caption{$\mathcal{T}^*$ vs. $M$ for different $\varrho$'s and $\lambda_J$'s, with $D=2$, $N=2$, and $\epsilon=0.3$. }
	\label{CT_M}
\end{figure}
Fig. \ref{CT_M} compares the CAS and DAS in terms of covert throughput $\mathcal{T}$ for different numbers of transmit antennas $M$. 
{The covert throughput obtained in \cite{He2017Covert} is examined as the benchmark performance in which an extra reliability requirement $\mathcal{C}\geq \varrho$ is imposed.
It is found that the throughput of the suboptimal scheme approach closely to that of the optimal scheme for comparatively few transmit antennas, and both schemes achieve a remarkable throughput gain than that from \cite{He2017Covert}.
The fundamental reason is that, by permitting a slight sacrifice of reliability, a prominent throughput improvement can be gained. 
Moreover, the CAS offers an outstanding superiority over the DAS, especially for a fairly large $M$. 
This suggests, the CAS should be a priority for covert communications in random networks. }
From the dashed lines in this figure, it is observed that $\mathcal{T}$ remains unchanged w.r.t. the density of interferers $\lambda_J$. This validates the invariance property stated in Proposition \ref{proposition_opt_beta_r}. {It is explained as follows: the concomitant interferers can effectively hinder the detection, thus enabling a higher transmit power while not compromising on the covertness requirement; the increase of transmit power in return can neutralize well the adverse impact of the interference on throughput performance.}

\section{Conclusions}
{The potential of multi-antenna systems for covert communications in random networks is examined.
Considering the worst-case covert communication where the wardens employ optimal detection thresholds for their detectors, the covert throughput is maximized by successively adjusting the transmit power and the transmission rate.
An interesting invariance property is revealed for both the CAS and DAS, which states that, whatever the number of transmit antennas is, the maximal covert throughput is not impacted by the density of interferers. 
Numerical results indicate that the CAS is more favorable to the covert communication for random networks than the DAS, particularly when a large number of transmit antennas are deployed.}

\appendix

\subsection{Proof of Theorem \ref{theorem_opt_xi}}
\label{appendix_theorem_opt_xi}
Rewrite \eqref{pe_4} as a function of $\xi$ given below:
	\begin{equation}\label{min_q}
	\bar p_w(\xi) = -{\rm erf}({A}/{\sqrt{\xi}})+e^{-B\xi}Y(\xi).
	\end{equation}
	Then, the derivative $\frac{d\bar p_w(\xi)}{d\xi}$ is given by
	\begin{equation}\label{dpe1}
	\frac{d\bar p_w(\xi)}{d\xi} = {A}e^{-\frac{A^2}{\xi}}/{\sqrt{\pi\xi^3}}-Be^{-B\xi}Y(\xi)+B{\rm erf}({A}/{\sqrt{\xi}}),
	\end{equation}
	which is due to $\frac{dY(\xi)}{d\xi} =Be^{B\xi}{\rm erf}\left({A}/{\sqrt{\xi}}\right)$. Note that $\frac{d\bar p_w(\xi)}{d\xi}|_{\xi=0} = {A}e^{-{A^2}/{\xi}}/{\sqrt{\pi\xi^3}}>0$ and $\frac{d\bar p_w(\xi)}{d\xi}|_{\xi\rightarrow\infty}\rightarrow -Be^{-B\xi}Y(\xi)<0$, which implies there is at least one zero-crossing of $\frac{d\bar p_w(\xi)}{d\xi}$. Denote an arbitrary one as $\xi_o$ such that $\frac{d\bar p_w(\xi)}{d\xi}|_{\xi=\xi_o} =0$, and substituting it into \eqref{dpe1} yields
	\begin{equation}\label{opt_xi_proof}
	Be^{-B\xi_o}Y(\xi_o)= B{\rm erf}({A}/{\sqrt{\xi_o}})+{A}e^{-\frac{A^2}{\xi_o}}/{\sqrt{\pi\xi_o^3}}.
	\end{equation}
	With \eqref{opt_xi_proof} in mind, the second derivative $\frac{d^2\bar p_w(\xi)}{d\xi^2}$ at $\xi=\xi_o$ is given by
	\begin{equation}
	\frac{d^2\bar p_w(\xi)}{d\xi^2}|_{\xi=\xi_o}={A}e^{-\frac{A^2}{\xi_o}}\left(A^2-{3}\xi_o/{2} \right)/{\sqrt{\pi\xi_o^7}}.
	\end{equation}
The sign of $\frac{d^2\bar p_w(\xi)}{d\xi^2}|_{\xi=\xi_o}$ is solely determined by the term $A^2-\frac{3}{2}\xi_o$.
The uniqueness of the zero-crossing of ${d\bar p_w(\xi)}/{d\xi}$ is proved below by contradiction. 
	
	Suppose there are $K>1$ zero-crossing points of $\frac{d\bar p_w(\xi)}{d\xi}$ sorted as $0<\xi_{o,1}\le\xi_{o,2}\le\cdots \le\xi_{o,K}$. Since $\frac{d\bar p_w(\xi)}{d\xi}|_{\xi=0} >0$, the first (i.e., the minimal) zero-crossing $\xi_{o,1}$ must satisfy $\frac{d^2\bar p_w(\xi)}{d\xi^2}|_{\xi=\xi_{o,1}}<0$, which yields $A^2-\frac{3}{2}\xi_{o,1}<0$. Otherwise, $\xi_{o,1}$ never exists.
	Evidently, for any $k>1$,  $A^2-\frac{3}{2}\xi_{o,k}<0$ and $\frac{d^2\bar p_w(\xi)}{d\xi^2}|_{\xi=\xi_{o,k}}<0$ hold as $\xi_{o,k}\ge\xi_{o,1}$. This indicates, $\frac{d\bar p_w(\xi)}{d\xi}$ initially is positive and then becomes negative after $\xi$ exceeds $\xi_{o,1}$. In other words, there is only one zero-crossing of $\frac{d\bar p_w(\xi)}{d\xi}$, denoted as $\xi_o$, which is the solution that maximizes $\bar p_w(\xi)$ and can be calculated by setting $\frac{d\bar p_w(\xi)}{d\xi}$ in \eqref{dpe1} to zero. Substituting the optimal $\xi_o$ satisfying \eqref{opt_xi_proof} into \eqref{pe_4} completes the proof.

\subsection{Proof of Corollary \ref{corollary_opt_pa}}
\label{appendix_corollary_opt_pa}
The first three properties are proved by noting that $\mathcal{O}$ remains constant with $M$, increases with $P_A$, and decreases with $N$, respectively. Since  $\mathcal{O}$ is independent of $M$, the fourth property can be proved similarly as \cite[Theorem 1]{He2017Covert} for a single-antenna transmitter. However, the proof for \cite[Theorem 1]{He2017Covert} is incomplete, since only the sufficient condition is provided. The proof is simplified here by taking the special case $\alpha = 4$ as an example. 
Recalling $\bar p_{w,max}$ in \eqref{opt_pe} 
with $A \propto {\lambda_J\sqrt{P_J}}$ and $B\propto 1/{P_A}$, it is proved that if a group $\left\{A_1,P_{A,1},\xi_{o,1}\right\}=\left\{A_0,P_{0},\xi_{0}\right\}$ yields $\bar p_{w,max}=\eta\in[0,1]$, then an arbitrary group $\left\{A_2,P_{A,2},\xi_{o,2}\right\}=\left\{\varpi A_0,\varpi^2 P_{0},\varpi^2\xi_{0}\right\}$ for $\varpi>0$ also produces $\bar p_{w,max}=\eta$.
The necessity of $P_{A,2}=\varpi^2 P_{0}$ for satisfying $\bar p_{w,max}=\eta$ when $A_2=\varpi A_0$ is further verified by the monotonicity of $\bar p_{w,max}$ w.r.t. $P_A$ shown in Corollary \ref{corollary_pw_pa}.
That means $A^2/P_A = \lambda_J^2{P_J}/P_A$ maintains invariant w.r.t. $\lambda_J$ and $P_J$ on the premise of $\bar p_{w,max}=\eta$. 
Since $\mathcal{O}$ in \eqref{oco_cal} increases with $\bar p_{w}$, the invariance property given above is also valid for $P_A=P_{max}$ with $\mathcal{O}(P_{max})=\epsilon$, i.e., $P_{max}\propto \lambda_J^2{P_J}$.

\subsection{Proof of Theorem \ref{theorem_opt_beta}}
\label{appendix_theorem_opt_beta}
The derivative of $\mathcal{T}_o(\beta)$ w.r.t. $\beta$ is calculated from \eqref{ct_max2}, which is $\frac{d\mathcal{T}_o(\beta)}{d\beta} ={e^{-\phi_o\beta^{\delta}}}Q(\beta),$
where $Q(\beta)$ is defined in \eqref{opt_beta}.
It is easy to show that $\frac{d\mathcal{T}_o(\beta)}{d\beta}|_{\beta=0}=1>0$ and $\frac{d\mathcal{T}_o(\beta)}{d\beta}|_{\beta\rightarrow\infty}<0$. Hence, at least one zero-crossing of  $\frac{d\mathcal{T}_o(\beta)}{d\beta}$ exists. 
Denote an arbitrary one as $\beta_o$ such that ${e^{-\phi_o\beta_o^{\delta}}}Q(\beta_o)=0$ which further yields $Q(\beta_o)=0$.
The second derivative $\frac{d^2\mathcal{T}_o(\beta)}{d\beta^2}$ at $\beta=\beta_o$ is computed as 
	\begin{align}\label{dt2}
	\frac{d^2\mathcal{T}_o(\beta)}{d\beta^2}|_{\beta=\beta_o}	&={e^{-\phi_o\beta_o^{\delta}}}\frac{dQ(\beta)}{d\beta}|_{\beta=\beta_o}\nonumber\\
	&=\frac{\delta e^{-\phi_o\beta_o^{\delta}}}{1+\beta_o}\left[\frac{1+K_{\alpha,M}\phi_o\beta_o^{\delta}}{\delta(1+\beta_o)}\left(1-\frac{\beta_o}{\ln(1+\beta_o)}\right)\right.\nonumber\\
	&\left.-1-\frac{ K_{\alpha,M}\phi_o\beta_o^{\delta} (1+K_{\alpha,M}\phi_o\beta_o^{\delta})}{1+K_{\alpha,M}(\phi_o\beta_o^{\delta}-1)}\right].
	\end{align}
It is known that $1+K_{\alpha,M}(\phi_o\beta_o^{\delta}-1)>0$ from
$Q(\beta_o)=0$. 
Plugging this inequality with $\beta_o>\ln(1+\beta_o)$ into \eqref{dt2} yields $\frac{d^2\mathcal{T}_o(\beta)}{d\beta^2}|_{\beta=\beta_o}<0$, which indicates that $\mathcal{T}_o(\beta)$ first increases with $\beta$ and then decreases after $\beta$ exceeds $\beta_o$. Hence, $\beta_o$ is the optimal $\beta$ that maximizes $\mathcal{T}_o(\beta)$.

\subsection{Proof of Proposition \ref{proposition_opt_beta_r}}
\label{appendix_proposition_opt_beta_r}
Note that $K_{\alpha,M}=\delta\sum_{m=1}^{M-1}({1}/{m!})
\prod_{l=1}^{m}(l-\delta)$ increases with $M$ and $\phi_o = {\kappa\lambda_J P_J^{\delta} r_{a,o}^2}/{P_{max}^{\delta}}$ decreases with $P_{max}$. As indicated by Corollary \ref{corollary_opt_pa}, $P_{max}$ increases with $\epsilon$, decreases with $N$, and is independent of $M$. Invoking Corollary \ref{corollary_opt_beta}, it is proved that $\beta_o$ increases with $M$ and $\epsilon$ and decreases with $r_{a,o}$ and $N$, respectively.
Next, as shown in Corollary \ref{corollary_opt_pa} that $P_{max}\propto\lambda_J^{\alpha/2} P_J$, it is known that $\phi_o$ does not change with $\lambda_J$ and $P_J$. Therefore, the optimal $\beta_o$ is independent of $\lambda_J$ and $P_J$. For the maximal $\mathcal{T}_o$, the monotonicity w.r.t. these parameters is the same as the monotonicity of $\mathcal{C}$ w.r.t. them. It is easy to prove that $\mathcal{C}$ increases with $M$ and $\epsilon$, decreases with $r_{a,o}$ and $N$, and remains unchanged with $\lambda_J$ and $P_J$, respectively. This completes the proof.

\subsection{Proof of Theorem \ref{theorem_cp_dma}}
\label{appendix_theorem_cp_dma}
Let $S_o =  \left(\sum_{m=1}^M X_m\right)^2$ with $X_m=\sqrt{P_{D_m}}|h_{a_m,o}|r_{a_m,o}^{-\alpha/2}$, and $\mathcal{C}$ in \eqref{def_oc} can be rewritten as 
\begin{equation}\label{cp_dma_2}
\mathcal{C}=1 - \mathbb{P}\left\{S_o<\beta I_o\right\}
=1-\mathbb{P}\left\{\sum_{m=1}^M X_m<\sqrt{{\beta I_o}}\right\}.
\end{equation}
As $X_m$ obeys the Rayleigh distribution with the PDF $f_{X_m}(x_m)=\frac{2{r_{a_m,o}^{\alpha}}{x_m}}{P_{D_m}}e^{-r_{a_m,o}^{\alpha}{x_m^2}/P_{D_m}}$, due to the mutual independence among $\{X_m\}_{m=1}^M$, the joint PDF is given by 
\begin{equation}\label{joint_pdf}
f_{X_1,\cdots,X_M}(x_1,\cdots,x_M) =\prod_{m=1}^M\frac{2{r_{a_m,o}^{\alpha}}{x_m}}{P_{D_m}} e^{-\frac{{r_{a_m,o}^{\alpha}}{x_m}}{P_{D_m}}}.
\end{equation}
Substituting \eqref{joint_pdf} into \eqref{cp_dma_2} yields
\begin{align}\label{cp_dma_3}
\mathcal{C} &=1-\mathbb{E}_{I_o}\left[\int_{\mathcal{X}}\prod_{m=1}^{M}\left(\frac{2{r_{a_m,o}^{\alpha}}{x_m}}{P_{D_m}} e^{-\frac{{r_{a_m,o}^{\alpha}}{x_m}}{P_{D_m}}}\right)dx_1,\cdots, dx_M\right]\nonumber\\
&\stackrel{\mathrm{(e)}}=
1 -  \int_{\mathcal{V}}\mathbb{E}_{I_o}\left[I_o^Me^{-\psi_1\beta I_o}\right]\beta^M\psi_2dv_1,\cdots, dv_M,
\end{align}
with the integration domain $\{\mathcal{X}: x_1\ge 0, \cdots, x_M\ge 0, \sum_{m=1}^M x_m<\sqrt{\beta I_o}\}$; $\rm (e)$ follows from the  replacement $x_m \rightarrow v_m\sqrt{\beta I_o} $. Substituting $\mathbb{E}_{I_o} \left[
I_o^me^{-sI_o}\right]=(-1)^m\frac{d^m \mathcal{L}_{I_o}(s)}{ds^m}$ in Theorem \ref{theorem_cp} with $s=\psi_1\beta$ into \eqref{cp_dma_3} completes the proof.

\subsection{Proof of Theorem \ref{theorem_opt_beta_dma}}
\label{appendix_theorem_opt_beta_dma}
From \eqref{ct_dma},  $\beta<W^{-1/\delta}$ should be ensured to achieve a positive $\mathcal{T}_o(\beta)$.
It is easily proved that $\frac{d\mathcal{T}_o(\beta)}{d\beta}$ in \eqref{opt_beta_dma} is positive at $\beta=0$ and becomes negative as $\beta\rightarrow W^{-1/\delta}$. Hence, at least one zero-crossing of $\frac{d\mathcal{T}_o(\beta)}{d\beta}$ exists. Denote an arbitrary one as $\beta_o$ such that $\frac{d\mathcal{T}_o(\beta)}{d\beta}|_{\beta=\beta_o}=0$. This yields $\frac {1-W\beta_o^{\delta}}{1+\beta_o}=W\delta\beta_o^{\delta-1}\ln(1+\beta_o)$, and then $\frac{d^2\mathcal{T}_o(\beta)}{d\beta^2}$ at $\beta=\beta_o$ is given by
\begin{equation}
\frac{d^2\mathcal{T}_o(\beta)}{d\beta^2}|_{\beta=\beta_o}=\frac{-\delta}{\beta_o(1+\beta_o)}\left[1+W\left(\frac{\beta_o-\ln(1+\beta_o)}{\beta_o^{1-\delta}}\right)\right].
\end{equation}
$\beta_o\ge\ln(1+\beta_o)$ yields $\frac{d^2\mathcal{T}_o(\beta)}{d\beta^2}|_{\beta=\beta_o}<\frac{-\delta}{\beta_o(1+\beta_o)}<0$, i.e., $\mathcal{T}_o$ is quasi-concave on $\beta$. In other words, $\mathcal{T}_o(\beta)$ initially increases and then decreases with $\beta$ and is maximized at $\beta=\beta_o$.

\end{document}